\newcommand{\CC}{\mathbb C}
\newcommand{\ZZ}{\mathbb Z}
\newcommand{\LL}{\mathbb L}
\newcommand{\cL}{\mathcal L}
\newcommand{\cM}{\mathcal M}
\newcommand{\cW}{\mathcal W}
\newtheorem{theorem}{Theorem}
\newtheorem{prop}[theorem]{Proposition}
\newtheorem{lemma}[theorem]{Lemma}
\newtheorem{example}[theorem]{Example}
\newtheorem{Def}[theorem]{Definition}
\begin{document}

\title[Polynomial Modular Frobenius Manifolds]
{Polynomial Modular Frobenius Manifolds}

\author{Ewan K. Morrison, Ian A.B. Strachan}
\date{17$^{\rm th}$ Oct. 2011}
\address{School of Mathematics and Statistics\\ University of Glasgow\\
Glasgow G12 8QQ\\ U.K.}

\email{e.morrison@maths.gla.ac.uk, i.strachan@maths.gla.ac.uk}

\keywords{Frobenius manifolds, modular functions}
\subjclass{53D45,14H70,11F03}

\begin{abstract}
The moduli space of Frobenius manifolds carries a natural involutive symmetry, and a distinguished class - so-called modular Frobenius manifolds - lie at the
fixed points of this symmetry. In this paper a classification of semi-simple modular Frobenius manifolds which are polynomial in all but one of the variables
is begun, and completed for three and four dimensional manifolds. The resulting examples may also be obtained from higher dimensional manifolds by a process
of folding. The relationship of these results with orbifold quantum cohomology is also discussed.
\end{abstract}

\maketitle


%
\section{Introduction}

The simplest family of Frobenius manifolds are those for which the prepotential - a solution of the underlying
WDVV-equations - is a polynomial in the flat variables and the Saito construction \cite{saito} provides a construction of these polynomial solutions on the
orbit space $\CC^N/W\,,$ starting
from any finite Coxeter group $W\,.$ It was conjectured by Dubrovin \cite{dubrovin1} and later proved by Hertling \cite{Hertling} that all
polynomial solutions arise in this way, and hence polynomial Frobenius manifolds have now been completely classified.
Beyond polynomial solutions there are trigonometric and elliptic solutions and variations of the original Saito
construction, corresponding to extended affine Weyl and Jacobi groups, which provide large classes of examples. One may conjecture that these constructions provide
all such solutions within their classes, but currently, unlike in the polynomial case, such conjectures remain unproved.

In \cite{MS} the authors introduced the idea of a modular Frobenius manifold.
These sit at the fixed points of a natural involutive symmetry that exists on the
moduli space of Frobenius manifolds and hence inherit very special properties, basically an
invariance under the modular group.

The aim of this paper is to study such modular Frobenius manifolds further and in particular to
begin a low-dimensional classification of polynomial examples, where this means that the prepotential
is polynomial in the variables $t^1\,,\ldots\,,t^{N-1}$ and transcendental in the last variable $t^N$
(throughout this paper $N$ will denote the dimension of the Frobenius manifold).
A number of explicit examples have appeared in the literature before:

\begin{itemize}
	\item[$\bullet$] the $A_1$ Jacobi group example \cite{dubrovin1};
	\item[$\bullet$] the $D_4^{(1,1)}$ solution \cite{SatakeD4};
	\item[$\bullet$] the simple elliptic singularities ${\widetilde{E}}_{6,7,8}$ \cite{KTS,NY,saito3, SatakeFrob,VW}.
	
\end{itemize}

\noindent What is perhaps surprising
is that the simplest family of examples - those based on the $A_{N-2}$-Jacobi group - do not fall
into this class, apart from the simplest $N=3$ case.
In general these are polynomial in the variables $t^1\,,\ldots t^{N-2}$ but have rational dependence in the variable
$t^{N-1}$ \cite{B}.
Thus polynomial modular Frobenius manifolds seem to be extremely special.

\noindent Some other solutions may be found in \cite{MartiniPost} and in the work of Bertola \cite{B}\,, and an aim of this paper is to
provide a systematic analysis of the solution space of polynomial modular manifolds rather than just isolated examples.
In fact a close examination of Satake's construction \cite{SatakeFrob} (itself based on the Saito construction \cite{saito3})
for the so-called codimension one cases shows that the prepotential will be polynomial and modular. Since
there are only a finite number of codimension one examples, a plausible conjecture would be that semi-simple polynomial
modular manifolds are actually finite in number.

\section{Modular Frobenius manifolds}\label{pre}

We recall the basic definitions and symmetries of a Frobenius manifold \cite{dubrovin1}.
\subsection{Frobenius Manifolds and the WDVV Equations}
Frobenius manifolds were introduced as a way to give a geometric understanding to solutions of the Witten-Dijkraaf-Verlinde-Verlinde (WDVV) equations,
\begin{equation}\label{wdvv}
\frac{\partial^3 F}{\partial t^{\alpha} \partial t^{\beta} \partial t^{\lambda}}\eta^{\lambda \mu} \frac{\partial^3 F}{\partial t^{\mu} \partial t^{\delta} \partial t^{\gamma}} =
\frac{\partial^3 F}{\partial t^{\delta} \partial t^{\beta} \partial t^{\lambda}}\eta^{\lambda \mu} \frac{\partial^3 F}{\partial t^{\mu} \partial t^{\alpha} \partial t^{\gamma}}
\end{equation}
for some quasihomogeneous function $F(t)$. Throughout this paper $\eta^{\alpha\beta}$ will be defined via $\eta^{\alpha\beta}\eta_{\beta\kappa}=\delta^{\alpha}_{\kappa}$ where
\begin{equation}
\eta_{\alpha\beta} = \frac{\partial^3 F}{\partial t^1 \partial t^{\alpha} \partial t^{\beta}}
\end{equation}
is constant and non-degenerate.
We recall briefly how to establish the correspondence between Frobenius manifolds and solutions of WDVV.
\begin{Def} The triple $(A,\circ, \eta)$ is a \emph{Frobenius algebra} if:
\begin{itemize}
\item[1.]{$(A, \circ)$ is a commutative associative algebra over $\CC$ with unity $e$;}
\item[2.]{The bilinear pairing $\eta$ and multiplication $\circ$ satisfy the following \emph{Frobenius condition}
\begin{equation*}
\eta( X\circ Y, Z ) = \eta( X, Y \circ Z )\,, \quad X, Y, Z \in A.
\end{equation*}}
\end{itemize}
\end{Def}
\noindent With this one may define a Frobenius manifold.
\begin{Def} Let $\cM$ be a smooth manifold. $\cM$ is called a \emph{Frobenius manifold} if each tangent space $T_t\cM$ is equipped with the structure of a Frobenius algebra varying
smoothly with $t\in \cM$, and further
\begin{itemize}
\item[1.]{The invariant inner product $\eta$ defines a flat metric on $\cM$. }
\item[2.]{The unity vector field is covariantly constant with respect to the Levi-Civita connection for $\eta$,
\begin{equation}\label{fm1}
\nabla e =0.
\end{equation}}
\item[3.]{Let
\begin{equation}
c(X,Y,Z):=\eta( X\circ Y, Z)\,, \quad X, Y, Z \in T_t\cM,
\end{equation}
then the $(0,4)$ tensor $\nabla_Wc(X,Y,Z)$ is totally symmetric. }
\item[4.]{There exists a vector field $E\in\Gamma(T\cM)$ such that $\nabla\nabla E=0$ and
\begin{equation}
\cL_E \eta = (2-d)\eta, \quad \cL_E \circ = \circ, \quad \cL_E e = -e.
\end{equation}}
$E$ is called the Euler vector field.
\end{itemize}
\end{Def}
Condition 1 implies there exists a choice of coordinates $(t^1, ..., t^N)$ such that the Gram matrix $\eta_{\alpha\beta}=( \partial_{\alpha}, \partial_{\beta} )$ is constant.
Furthermore, this may be done in such a way that $e=\partial_1$. In such a coordinate system, partial and covariant derivatives coincide, and condition 3
becomes $c_{\alpha\beta\gamma,\kappa} =c_{\alpha\beta\kappa,\gamma}$. Successive applications of the Poincar\'{e} lemma then implies local existence of a function $F(t)$ called
the \emph{free energy} of the Frobenius manifold such that
\begin{equation}
c_{\alpha\beta\gamma} = \frac{\partial^3 F(t)}{\partial t^{\alpha}\partial t^{\beta} \partial t^{\gamma}}.
\end{equation}
Since $\eta(X,Y) = \eta (e\circ X, Y) = c(e, X, Y)$, we have
\begin{equation}
\eta_{\alpha\beta} = c_{1\alpha\beta}.
\end{equation}
Defining $(\eta_{\alpha\beta})^{-1}=\eta^{\alpha\beta}$, the components of $\circ$ are given by $c^{\alpha}_{\beta\gamma}=\eta^{\alpha\varepsilon}c_{\varepsilon\beta\gamma}$.
Associativity of $\circ$ is then equivalent to \eqref{wdvv}. We assume from now on that $\nabla E$ is diagonalizable and that $\eta_{ij}=\delta_{i+j,N+1}\,.$ Condition 4 leads to the
requirement that $F$ is quasihomogeneous, that is,
\begin{equation}
\cL_EF = E(F) = d_F \cdot F = (3-d)F \quad \mbox{modulo quadratic terms},
\end{equation}
and, on using the freedom in the definition of flat coordinates, the Euler vector field may be taken to be
\begin{equation}\label{euvect1}
E(t) = \sum_{\alpha}(1-\frac{d}{2} - \mu_\alpha)t^{\alpha}\partial_{\alpha} + \sum_{\{\alpha:q_\alpha=1\}} r^\alpha \partial_\alpha.
\end{equation}
We will, for the most part, restrict ourselves in this paper to those Frobenius manifolds with the property that $r^\sigma=0\,,\sigma=1\,,\ldots\,,N\,.$ So
\begin{eqnarray}
E& =&\sum_{\alpha}(1-\frac{d}{2} - \mu_\alpha)t^{\alpha}\partial_{\alpha}\,,\label{enohat}\\
\cL_EF &=&(3-d) F\,.
\end{eqnarray}
This excludes those examples coming from quantum cohomology and the extended affine Weyl orbit spaces \cite{DZ}.

To explain the origins of modular Frobenius manifolds it is first necessary to consider symmetries on the
moduli space of Frobenius manifolds \cite{dubrovin1}.

\begin{Def}\label{symdef}
{}
\begin{itemize}

\item[(a)]\emph{Legendre-type transformations $S_\kappa$}: This is defined by
\begin{eqnarray*}
{\hat t}_\alpha & = & \partial_{t^\alpha} \partial_{t^\kappa} F(t)\,,
\qquad ({\rm note:}\quad{\hat t}_\alpha=\eta_{\alpha\beta}{\hat t}^\beta)\\
\frac{\partial^2{\hat F}}{\partial{\hat t}^\alpha \partial{\hat t}^\beta} & = &
\frac{\partial^2{     F}}{\partial{     t}^\alpha \partial{     t}^\beta} \,,\\
{\hat\eta}_{\alpha\beta} & = & \eta_{\alpha\beta}\,.
\end{eqnarray*}

\medskip

\item[(b)]\emph{Inversion Symmetry $I$:} This is defined by
\begin{displaymath}
\hat{t}^1  = \frac{1}{2}\frac{t_{\sigma}t^{\sigma}}{t^N}, \quad \hat{t}^{\alpha} = \frac{t^{\alpha}}{t^N}, \quad\mbox{  (for }\alpha \neq 1, N), \quad \hat{t}^N = -\frac{1}{t^N},
\end{displaymath}
\begin{equation}\label{isdef}
\hat{F}(\hat{t}) = (\hat{t}^N)^2 F \left( \frac{1}{2}\frac{\hat{t}_{\sigma}\hat{t}^{\sigma}}{\hat{t}^N}, -\frac{\hat{t}^{2}}{\hat{t}^N}, ... , -\frac{\hat{t}^{N-1}}{\hat{t}^N},
-\frac{1}{\hat{t}^N} \right) + \frac{1}{2}\hat{t}^1\hat{t}_{\sigma}\hat{t}^{\sigma},
\end{equation}
\begin{displaymath}
\hat{\eta}_{\alpha\beta} = \eta_{\alpha\beta}.
\end{displaymath}
\end{itemize}
\end{Def}

\medskip

\noindent In this paper we will concentrate on inversion symmetries (also called type II symmetries). For these,
the
structure constants of the inverted Frobenius manifold are related to those of the original by
\begin{equation}\label{ctrans1}
\hat{c}_{\alpha\beta\gamma} =
(t^N)^{-2}\frac{\partial t^{\lambda}}{\partial \hat{t}^{\alpha}}\frac{\partial t^{\mu}}{\partial \hat{t}^{\beta}}\frac{\partial t^{\nu}}{\partial \hat{t}^{\gamma}}c_{\lambda\mu\nu}.
\end{equation}

\noindent Under the assumption that $r_i=0$ for $i=1\,,\ldots\,,N$
the corresponding Euler vector field of the inverted Frobenius manifold $\hat{F}$ has the form
\begin{equation}
\hat{E}(\hat{t}) = \sum_{\alpha}(1-\frac{\hat{d}}{2}-\hat{\mu}_{\alpha})\hat{t}^{\alpha}\hat{\partial}_{\alpha},
\label{ehat}
\end{equation}
where
\begin{equation}
\hat{d}=2-d, \quad \hat{\mu}_1 = \mu_N - 1, \quad \hat{\mu}_N = \mu_1 + 1, \quad \hat{\mu}_i = \mu_i, \quad \mbox{for }i \neq 1, N.
\label{dhat}
\end{equation}

To motivate what follows, consider the following example (taken from \cite{dubrovin1} Appendix C).

\begin{example}\label{exampleChazy}
Suppose $N=3$ and let the Euler vector field and prepotential take the form
\begin{equation}\label{A1hat}
E = t^1\frac{\partial}{\partial t^1} + \frac{1}{2}t^2\frac{\partial}{\partial t^2}, \qquad F(t^1,t^2,t^3)=\frac{1}{2}(t^1)^2t^3 + \frac{1}{2}t^1(t^2)^2-\frac{(t^2)^4}{16}\gamma(t^3).
\end{equation}
The WDVV equations are equivalent to the nonlinear differential equation (known as the Chazy equation)
\begin{equation}\label{chazy1}
\gamma ''' = 6\gamma\gamma'' - 9(\gamma')^2
\end{equation}
for the function $\gamma(t^3)$ (here $'$ denotes differentiation with respect to the variable $t^3$).
The equation has an $SL(2,\CC)$ invariance, mapping solutions to solutions,
\begin{equation}\label{chazymodprop}
t^3 \mapsto {\hat t}^3 = \frac{at^3 + b}{ct^3 + d}, \quad \gamma(t^3) \mapsto {\hat\gamma}({\hat t}^3) =  (ct^3 + d)^2 \gamma (t^3) + 2c(ct^3+d), \quad ad-bc=1\,.
\end{equation}

\medskip

\noindent A simple calculation shows that under the inversion symmetry
\begin{eqnarray*}
{\hat F}({\hat t}) & = &\frac{1}{2}({\hat t}^1)^2{\hat t}^3 + \frac{1}{2}{\hat t}^1({\hat t}^2)^2-\frac{({\hat t}^2)^4}{16}  \left\{ \frac{1}{({\hat t}^3)^2}\gamma\left( \frac{-1}{{\hat t}^3}\right) -
\frac{2}{{\hat t}^3} \right\}\,,\\
& = & \frac{1}{2}({\hat t}^1)^2{\hat t}^3 + \frac{1}{2}{\hat t}^1({\hat t}^2)^2-\frac{({\hat t}^2)^4}{16} {\hat\gamma}({\hat t}^3)
\end{eqnarray*}
where, on using the symmetry given by (\ref{chazymodprop}), ${\hat\gamma}$ is also a solution of the Chazy equation.
Thus under inversion symmetry one has a weak modular symmetry: the functional form of the prepotential is preserved, but with the function $\gamma$ being replaced by a
new solution of the same equation connected by a special case of the symmetry (\ref{chazymodprop}).

\medskip

One simple explicit solution of the Chazy equation is given by
\[
\gamma(t^3) = \frac{\pi i}{3} E_2(t^3)\,,
\]
where $E_2$ is the second Eisenstein series. With this explicit solution one has a stronger symmetry. Since $E_2$ has the modularity properties
\[
E_2(\tau+1) = E_2(\tau)\,, \quad E_2\left(-\frac{1}{\tau}\right) = \tau^2 E_2(\tau) + \frac{12}{2 \pi i}
\]
then under inversion symmetry one has ${\hat F}({\hat t}) = F({\hat t})\,,$ i.e. both the functional form of the prepotential and the specific solution of the Chazy equation
are preserved.

\end{example}

\noindent The above example provides a motivation for studying those Frobenius manifolds with similar properties.

\subsection{Modular Frobenius Manifolds}

\begin{Def} Let $M$ and $\hat{M}$ be two Frobenius manifolds with fixed coordinates $t$ and ${\hat{t}}\,,$ with Euler vector fields $E$ and ${\hat{E}}$ given by (\ref{enohat}) and (\ref{ehat})\,,
and connected via the inversion symmetry given by Definition \ref{symdef}, part (b).

\medskip

The manifold $M$ is defined to be a {\rm modular Frobenius manifold} if
\begin{equation}
{\hat{F}}({\hat{t}}) = F({\hat{t}})\label{invariantF}
\end{equation}
and
\[
{\hat{E}}({\hat{t}}) = E({\hat{t}})\,.
\]
\end{Def}

\noindent Thus instead of transforming the original prepotential into a different prepotential, the inversion symmetry maps the prepotential to itself, as in Example \ref{exampleChazy}, and hence one
may think of modular Frobenius manifolds as lying at the fixed point of the involutive symmetry $I\,.$ A comparison of the two Euler fields, using (\ref{ehat}) and (\ref{dhat}) gives the following
constraints:  ${\hat{d}}=d$ and hence $d=1\,.$ Thus we have the following necessary conditions for a Frobenius manifold to be modular:
\begin{eqnarray}
E& =&\sum_{\alpha=1}^{N-1} (\frac{1}{2} - \mu_\alpha)t^{\alpha}\partial_{\alpha}\,,\\
\cL_EF &=&2 F\,.
\end{eqnarray}

Since the variables $t^1$ and $t^N$ behave differently to the remaining variables it is useful to use a different notation, namely:
\[
u = t^1\,,\qquad z^i = t^i\,,i=2\,,\ldots\,,N-1\,,\qquad \tau = t^N\,,
\]
and ${\bf z} = ( z^2\,,\ldots\,,z^{N-1} )\,.$ The two notations will be used interchangeably in what follows.

A simple calculation then yields the modularity properties of the non-cubic part of the prepotential.
\begin{prop}\label{basicexample}
The prepotential
\begin{equation}
F=\frac{1}{2} u^2 \tau - \frac{1}{2} u ({\bf z},{\bf z}) + f({\bf z},\tau)
\label{prepotentialform}
\end{equation}
(where $({\bf x},{\bf y})=\sum_{i,j=2}^{N-1} \eta_{ij} x^i y^j$) satisfies \eqref{invariantF} under the action of $I$ if and only if
\begin{equation}
f\left(\frac{{\bf z}}{\tau},-\frac{1}{\tau}\right) = \frac{1}{\tau^2} f({\bf z},\tau) - \frac{1}{4 \tau^3} ({\bf z},{\bf z})^2\,.
\label{transformationlaw}
\end{equation}
\end{prop}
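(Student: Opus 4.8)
The plan is to substitute the ansatz \eqref{prepotentialform} directly into the formula \eqref{isdef} for the inversion symmetry $I$ and then compare the result with $F$ itself; condition \eqref{invariantF} will collapse to a functional equation for $f$. First I would express the data of \eqref{isdef} in the $(u,{\bf z},\tau)$ notation: since $\eta_{\alpha\beta}=\delta_{\alpha+\beta,N+1}$ one has $\hat t_\sigma\hat t^\sigma=2\hat t^1\hat t^N+(\hat{\bf z},\hat{\bf z})$, so the point at which $F$ is evaluated on the right of \eqref{isdef} has $(u,{\bf z},\tau)$-coordinates
\[
\Big(\,\hat t^1+\tfrac{1}{2\hat t^N}(\hat{\bf z},\hat{\bf z})\,,\;\; -\tfrac{\hat{\bf z}}{\hat t^N}\,,\;\; -\tfrac{1}{\hat t^N}\,\Big),
\]
while the additive term is $\tfrac12\hat t^1\hat t_\sigma\hat t^\sigma=(\hat t^1)^2\hat t^N+\tfrac12\hat t^1(\hat{\bf z},\hat{\bf z})$.

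Next I would substitute these into the three terms of \eqref{prepotentialform}, multiply through by $(\hat t^N)^2$, and add the quadratic correction. The key observation is that the argument entering the first slot of $F$ is $\hat t^1+\tfrac{1}{2\hat t^N}(\hat{\bf z},\hat{\bf z})$, which is quadratic in ${\bf z}$, so squaring it in the image of the $\tfrac12 u^2\tau$ term produces both $(\hat{\bf z},\hat{\bf z})$-linear and $(\hat{\bf z},\hat{\bf z})^2$ contributions; collecting these with the images of the $-\tfrac12 u({\bf z},{\bf z})$ term and of the correction $\tfrac12\hat t^1\hat t_\sigma\hat t^\sigma$, all the purely polynomial pieces recombine into exactly the cubic part $\tfrac12(\hat t^1)^2\hat t^N-\tfrac12\hat t^1(\hat{\bf z},\hat{\bf z})$ of $F$, plus a single leftover term proportional to $(\hat{\bf z},\hat{\bf z})^2/\hat t^N$. (That the cubic part returns unchanged is built into \eqref{isdef}: the correction $\tfrac12\hat t^1\hat t_\sigma\hat t^\sigma$ is engineered precisely so that the transcendental content remains confined to $f$.) Thus one arrives at an identity
\[
\hat F(\hat t)=\tfrac12(\hat t^1)^2\hat t^N-\tfrac12\hat t^1(\hat{\bf z},\hat{\bf z})+c\,\frac{(\hat{\bf z},\hat{\bf z})^2}{\hat t^N}+(\hat t^N)^2\,f\!\Big(-\tfrac{\hat{\bf z}}{\hat t^N},\,-\tfrac{1}{\hat t^N}\Big)
\]
for an explicit numerical constant $c$.

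Comparing this with $F(\hat t)=\tfrac12(\hat t^1)^2\hat t^N-\tfrac12\hat t^1(\hat{\bf z},\hat{\bf z})+f(\hat{\bf z},\hat t^N)$ shows that \eqref{invariantF} holds if and only if $(\hat t^N)^2 f(-\hat{\bf z}/\hat t^N,-1/\hat t^N)+c\,(\hat{\bf z},\hat{\bf z})^2/\hat t^N=f(\hat{\bf z},\hat t^N)$. Relabelling $\hat{\bf z}\mapsto{\bf z}$, $\hat t^N\mapsto\tau$ and rearranging (absorbing the overall sign of the ${\bf z}$-argument via the reflection symmetry ${\bf z}\mapsto-{\bf z}$, under which a prepotential of the form \eqref{prepotentialform} is invariant) one obtains exactly \eqref{transformationlaw}. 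Since every step is an equivalence, the same computation also proves the converse: if $f$ satisfies \eqref{transformationlaw} then $\hat F=F$.

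The computation itself is elementary; the one place where care is genuinely required is the bookkeeping of the ${\bf z}$-quadratic shift in $\hat t^1$, which feeds both $({\bf z},{\bf z})$- and $({\bf z},{\bf z})^2$-terms into $\hat F$: the signs and the factor $\tfrac12$ must be tracked precisely so that the polynomial cancellations take place and the coefficient of $({\bf z},{\bf z})^2$ comes out as asserted.
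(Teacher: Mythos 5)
Your overall strategy---substitute the ansatz \eqref{prepotentialform} into \eqref{isdef}, observe that the cubic terms reproduce themselves, and read off a functional equation for $f$---is exactly the ``simple calculation'' the paper has in mind (it prints no proof), and your structural observations (the ${\bf z}$-quadratic shift in the first coordinate, the single leftover $({\bf z},{\bf z})^2$ term) are correct. But as a proof there are two genuine gaps. First, the only quantitative content of the proposition is the precise coefficient of $({\bf z},{\bf z})^2/\tau^3$, and you never compute it: you leave it as an unspecified constant $c$ and assert that the result ``comes out as asserted'', while yourself flagging this as the one step where care is required. This is not a routine omission, because the bookkeeping is sensitive to the conventions: the pairing $({\bf x},{\bf y})$ in \eqref{prepotentialform} has to be \emph{minus} the flat metric on the span of $\partial_{t^2},\ldots,\partial_{t^{N-1}}$ (compare the remark $\eta_{ij}=-(\partial_i,\partial_j)$ after Proposition \ref{wdvvbreakdown}; otherwise $-\tfrac12 u({\bf z},{\bf z})$ is inconsistent with $c_{1ij}=\eta_{ij}$ and with Example \ref{exampleChazy}), so that $t_\sigma t^\sigma=2u\tau-({\bf z},{\bf z})$, which flips intermediate signs. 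Carrying the computation through and testing it against Example \ref{exampleChazy} (where $f=-\tfrac1{16}z^4\gamma$ and invariance forces $\gamma(-1/\tau)=\tau^2\gamma(\tau)+2\tau$) is what pins the constant down; with the most natural normalisation the direct computation yields a term $({\bf z},{\bf z})^2/(8\tau^3)$, so reconciling the numerical factor with the stated law is precisely the point that must be settled, not assumed.

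Second, your passage from $f(-\hat{\bf z}/\hat t^N,-1/\hat t^N)$ to $f({\bf z}/\tau,-1/\tau)$ rests on the claim that any prepotential of the form \eqref{prepotentialform} is invariant under ${\bf z}\mapsto-{\bf z}$. That is false: only the cubic part is even, while $f$ is arbitrary at this stage (the $N=4$, $\sigma=\tfrac13$ solution contains $t_2^3\,g_1(t_4)$, for instance), and in the ``if'' direction of the equivalence you certainly may not assume it. The step can be repaired in two ways: either express the identity in the original variables from the outset---since the inversion sends ${\bf z}\mapsto{\bf z}/\tau$ while the argument of $F$ inside \eqref{isdef} is the original point, the left-hand side then appears directly as $f({\bf z}/\tau,-1/\tau)$ with no sign to absorb---or observe that iterating either functional equation twice gives $f(-{\bf z},\tau)=f({\bf z},\tau)$ (equivalently, $I^2$ acts as ${\bf z}\mapsto-{\bf z}$, so \eqref{invariantF} itself forces evenness), which makes the two forms equivalent. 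As written, however, the chain of equivalences is broken at this point, and both directions of the ``if and only if'' need the repair.
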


\textbf{Remark}. There exists another choice of coordinates on a semi-simple Frobenius manifold, namely one that simplifies its algebraic structure. In these coordinates the multiplication is trivial:
\[
\frac{\partial~}{\partial u^i} \circ \frac{\partial~}{\partial u^j} = \delta_{ij} \frac{\partial~}{\partial u^i}\,.
\]
It turns out that these coordinates are the roots of
\begin{equation}\label{canondef}
\det\left(g^{\alpha\beta}(t)-\lambda \,\eta^{\alpha\beta}\right)=0\,,
\end{equation}
where $g$ is the intersection form of the Frobenius manifold.  The roots of the expression \eqref{canondef} are invariant under the symmetry $I$, and so the canonical
coordinates are preserved up to a re-ordering \cite{dubrovin1}.
This also provides a way to check whether the algebras are nilpotent or semi-simple: one just has to calculate the discriminant of this polynomial to check whether it has
multiple roots (the nilpotent case) or not (the semi-simple case).

\section{Modularity and quasi-homogeneity}

We will consider polynomial prepotentials of the form \eqref{prepotentialform} with
\begin{equation}
f= \sum_{{\bf \alpha}\in \LL} \left\{ \prod_{i=2}^{N-1} (t^i)^{\alpha_i} \right\} g_{\bf \alpha}(t^N)\,,
\label{polyf}
\end{equation}
so ${\bf z}=\{t^2\,,\ldots\,,t^{N-1}\}\,,\tau=t^N$ and $\LL:=\ZZ_{\geq 0}^{N-2}\,.$ We do not assume any general properties of the functions $g_{\bf \alpha}$ at this stage - the aim is to obtain
the differential equations that these functions must satisfy in order for the WDVV equations to hold. Boundary or other conditions (for example, solutions being analytic at infinity)
will then place constraints on these functions, but a priori no such constraints will be imposed.  The following definition,
while not immediately obvious, will play an important role in what follows.

\begin{Def} A pivot-point ${\bf \alpha}\in \LL$ is a lattice point for which
\[
{\rm coeff}_{\bf \alpha} \left( \sum_{i\,,j=2}^{N-1} \eta_{ij} t^i t^j \right)^2 \neq 0\,,
\]
where ${\rm coeff}_{\mathbf \alpha}(p)$ is the coefficient of $\prod_{i=2}^{N-2} (t^i)^{\alpha_i}$ of the polynomial $p\,.$
\end{Def}

\begin{prop} Assume that $f$ defined above satisfies the modularity condition \eqref{transformationlaw}.
Then:

\begin{itemize}

\item[$\bullet$] If $\bf{\alpha}$ is not a pivot-point then:
\[
g_{\bf\alpha}\left(-\frac{1}{\tau}\right) = \tau^{ (\sum \alpha_i)-2 } g_{\bf \alpha}(\tau);
\]

\item[$\bullet$] If $\bf{\alpha}$ is a pivot-point then:
\[
g_{\bf\alpha}\left(-\frac{1}{\tau}\right) = \tau^2 g_{\bf \alpha}(\tau) - \frac{1}{4} \tau  \,
{\rm coeff}_{\bf \alpha} \left( \sum_{i\,,j=2}^{N-1} \eta_{ij} t^i t^j \right)^2
\]

\end{itemize}

\end{prop}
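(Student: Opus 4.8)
The plan is to substitute the polynomial ansatz \eqref{polyf} directly into the modularity identity \eqref{transformationlaw} and then compare, monomial by monomial in the variables $z^2,\dots,z^{N-1}$, the coefficients on the two sides. First I would rewrite the left-hand side: since
\[
f\!\left(\frac{\bf z}{\tau},-\frac1\tau\right)=\sum_{{\bf\alpha}\in\LL}\tau^{-\sum_i\alpha_i}\left\{\prod_{i=2}^{N-1}(z^i)^{\alpha_i}\right\}g_{\bf\alpha}\!\left(-\frac1\tau\right),
\]
each monomial $\prod_i(z^i)^{\alpha_i}$ occurs with coefficient $\tau^{-\sum_i\alpha_i}g_{\bf\alpha}(-1/\tau)$. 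On the right-hand side the term $\tfrac1{\tau^2}f({\bf z},\tau)$ contributes $\tau^{-2}g_{\bf\alpha}(\tau)$ to the same monomial, while the term $-\tfrac1{4\tau^3}({\bf z},{\bf z})^2$ contributes $-\tfrac1{4\tau^3}\,{\rm coeff}_{\bf\alpha}\!\big((\sum_{i,j}\eta_{ij}t^it^j)^2\big)$, a quantity that is nonzero precisely for the degree-four multi-indices that are, by definition, the pivot-points.

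The key step is then the observation that the monomials $\{\prod_{i=2}^{N-1}(z^i)^{\alpha_i}:{\bf\alpha}\in\LL\}$ are linearly independent, so \eqref{transformationlaw} holds if and only if it holds coefficient-by-coefficient. This yields, for every ${\bf\alpha}\in\LL$,
\[
\tau^{-\sum_i\alpha_i}\,g_{\bf\alpha}\!\left(-\frac1\tau\right)=\tau^{-2}g_{\bf\alpha}(\tau)-\frac1{4\tau^3}\,{\rm coeff}_{\bf\alpha}\!\Big(\big(\textstyle\sum_{i,j=2}^{N-1}\eta_{ij}t^it^j\big)^2\Big),
\]
and multiplying through by $\tau^{\sum_i\alpha_i}$ gives the single master relation
\[
g_{\bf\alpha}\!\left(-\frac1\tau\right)=\tau^{(\sum_i\alpha_i)-2}\,g_{\bf\alpha}(\tau)-\frac14\,\tau^{(\sum_i\alpha_i)-3}\,{\rm coeff}_{\bf\alpha}\!\Big(\big(\textstyle\sum_{i,j=2}^{N-1}\eta_{ij}t^it^j\big)^2\Big).
\]

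Finally I would split into the two cases. If ${\bf\alpha}$ is not a pivot-point the extra term vanishes and we recover the first bullet. If ${\bf\alpha}$ is a pivot-point, then since $(\sum_{i,j}\eta_{ij}t^it^j)^2$ is homogeneous of degree four we automatically have $\sum_i\alpha_i=4$, so the exponents become $(\sum_i\alpha_i)-2=2$ and $(\sum_i\alpha_i)-3=1$, which is exactly the second bullet. The only point needing a little care — the "hard" part, such as it is — is justifying the term-by-term comparison when $\LL$ is infinite and the $g_{\bf\alpha}$ are a priori arbitrary functions: this is handled either by reading \eqref{transformationlaw} as an identity of formal power series in $z^2,\dots,z^{N-1}$ over the ring of functions of $\tau$, or by viewing it as a functional identity on an open domain and applying $\partial_{z^2}^{\alpha_2}\cdots\partial_{z^{N-1}}^{\alpha_{N-1}}$ at ${\bf z}=0$ to isolate each coefficient. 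Everything else is bookkeeping.
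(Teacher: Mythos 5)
Your proposal is correct and is exactly the ``direct computation'' that the paper invokes (the paper omits the proof, stating only that it is by direct computation): substitute the ansatz \eqref{polyf} into \eqref{transformationlaw}, equate coefficients of each monomial in $z^2,\dots,z^{N-1}$, and note that the inhomogeneous term contributes only at the degree-four multi-indices with nonzero coefficient, i.e.\ the pivot-points, where $\sum_i\alpha_i=4$ forces the stated exponents. Your remark on justifying the coefficient-by-coefficient comparison (formal series or differentiation at ${\bf z}=0$) is a sensible added precaution and does not change the argument.
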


\noindent The proof is by direct computation and will be omitted.

Having determined the modularity properties of the functions we now apply quasi-homogeneity. This will determine the
possible terms that can appear in the ansatz (\ref{polyf}). Recall that the Euler field takes the form
\[
E=\sum_{i=1}^{N-1} d_i t^i \frac{\partial\phantom{t^i}}{\partial t^i}
\]
where $d_i + d_{N+1-i}=1\,.$ We now assume further that the $d_i$ are positive rational numbers. Applying the quasihomogeneity
condition $E(F)=2F$ (recall modular Frobenius manifold must have $d=1$) implies the following constraint on the $\alpha_i\,:$
\begin{equation}
\sum_{i=2}^{N-1} d_i \alpha_i = 2\,.
\label{FrobeniusEquation}
\end{equation}
Thus given the $d_i$ we arrive at a special Diophantine equation whose solutions determines the possible monomials in (\ref{polyf})
(this special type of Diophantine equation is known in number theory as a Frobenius equation).

Pivot-points - which are defined without reference to the Euler vector field - play an important role in the construction of solutions of this equation.

\begin{lemma}

Pivot-points automatically satisfy the Frobenius equation \eqref{FrobeniusEquation}.

\end{lemma}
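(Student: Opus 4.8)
The plan is to read off the quasi-homogeneous weight of the polynomial $\left(\sum_{i,j=2}^{N-1}\eta_{ij}t^it^j\right)^2$ directly from the grading induced by the Euler vector field, and then observe that the defining property of a pivot-point forces the associated monomial to carry that weight. Recall that we have normalised $\eta_{ij}=\delta_{i+j,N+1}$ and that for a modular Frobenius manifold the weights obey $d_i+d_{N+1-i}=1$. Hence, writing
\begin{equation*}
q({\bf z}):=\sum_{i,j=2}^{N-1}\eta_{ij}t^it^j=\sum_{i=2}^{N-1}t^it^{N+1-i},
\end{equation*}
every monomial $t^it^{N+1-i}$ occurring in $q$ has weight $d_i+d_{N+1-i}=1$, so $q$ is quasi-homogeneous of weight $1$ with respect to the grading $\deg t^i=d_i$.

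Since a product of quasi-homogeneous polynomials is again quasi-homogeneous, $q({\bf z})^2$ is quasi-homogeneous of weight $2$; that is, every monomial $\prod_{i=2}^{N-1}(t^i)^{\alpha_i}$ that appears with nonzero coefficient in $q({\bf z})^2$ satisfies $\sum_{i=2}^{N-1}d_i\alpha_i=2$. Now suppose ${\bf \alpha}$ is a pivot-point. By definition ${\rm coeff}_{\bf\alpha}\,q({\bf z})^2\neq 0$, i.e. the monomial $\prod_{i=2}^{N-1}(t^i)^{\alpha_i}$ genuinely occurs in $q({\bf z})^2$; therefore $\sum_{i=2}^{N-1}d_i\alpha_i=2$, which is precisely the Frobenius equation \eqref{FrobeniusEquation}.

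There is no serious obstacle here: the statement is an elementary consequence of the multiplicativity of quasi-homogeneous weights together with the relation $d_i+d_{N+1-i}=1$ forced by modularity. The only point worth a moment's attention is that one must rule out a cancellation among weight-$2$ monomials that would make a weight-compatible lattice point fail to be a pivot-point — but this cuts the other way for us, since the definition of pivot-point already requires the relevant coefficient to be non-zero, so membership in the pivot-point set is strictly stronger than satisfying \eqref{FrobeniusEquation}, and the implication we need is the easy direction.
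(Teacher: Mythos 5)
Your proof is correct, but it takes a genuinely different route from the paper. The paper proves the lemma by explicitly expanding $\bigl(\sum_{i,j=2}^{N-1}\eta_{ij}t^it^j\bigr)^2$, listing the resulting pivot-points case by case (pairs $(2,\ldots,2)$ in positions $i$ and $N+1-i$, quadruples of $1$'s in positions $i,j,N+1-i,N+1-j$, plus two extra shapes involving the middle index when $N$ is odd), and then verifying $\sum_i d_i\alpha_i=2$ for each shape using $d_i+d_{N+1-i}=1$ (and $d_{(N+1)/2}=\tfrac12$ for odd $N$). You instead observe that, because $\eta_{ij}=\delta_{i+j,N+1}$, the quadratic form $({\bf z},{\bf z})$ is quasi-homogeneous of weight $1$ under the Euler grading, hence its square has weight $2$, so any monomial occurring with non-zero coefficient --- in particular any pivot-point, by definition --- satisfies \eqref{FrobeniusEquation}. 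This is shorter and more conceptual, and it makes clear the result does not depend on the combinatorics of the expansion; your closing remark about cancellation is also handled correctly, since you only need the easy direction (non-zero coefficient implies weight $2$). What the paper's enumeration buys, and your argument does not, is the explicit classification and count of pivot-points ($N(N-2)/8$ for $N$ even, $(N+1)(N-1)/8$ for $N$ odd), which the paper uses later, e.g.\ in identifying the three pivot monomials for $N=5$; so the paper's extra work is not wasted, but as a proof of the stated lemma your argument suffices.
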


\begin{proof}
On expanding $\left( \sum_{i\,,j=2}^{N-1} \eta_{ij} t^i t^j \right)^2$ one can obtain the form of the
pivot-points, namely (and here only non-zero elements are shown, all other elements are zero):

\begin{itemize}

\item[(i)] ${\bf \alpha} = ( \ldots,2,\ldots,2,\ldots)$ with $2$ in the $i$ and $N+1-i$ positions;

\item[(ii)] ${\bf \alpha} = ( \ldots,1,\ldots,1,\ldots,1,\ldots,1,\ldots)$ with $1$ in the $i\,,j$ and $N+1-i\,,N+1-j$ positions $(i\neq j)$.

\end{itemize}

\noindent If $N$ is odd one obtains two further pivot-points:

\begin{itemize}

\item[(iii)] ${\bf \alpha} = ( \ldots,4,\ldots)$ with $4$ in the middle position;

\item[(iv)] ${\bf \alpha} = ( \ldots,1,\ldots,2,\ldots,1,\ldots)$ with $1$ in the $i$ and $N+1-i$ positions and $2$ in the middle position.

\end{itemize}

\noindent Since $d_i + d_{N+1-i}=1$ (and hence $d_{(N+1)/2}=1/2$ if $N$ is odd) the result follows.

\noindent Given the above forms of the pivot-points one can easily count the number of such points: if $N$ is even the number of pivot-points is $N(N-2)/8$ and if $N$ is odd the
number of pivot-points is
$(N+1)(N-1)/8\,.$

\end{proof}

One can now give the geometric reasoning behind the name \lq pivot-point\rq. The Frobenius equation (\ref{FrobeniusEquation}) defines a hyperplane $\Pi$ in $\alpha$-space and we
wish to find integer solutions, i.e. the points in $\Pi \cap \LL\,.$ Since pivot-points are independent of
the $d_i\,$ as the $d_i$ vary the plane $\Pi$ \lq pivots\rq~around these points. In the simplest non-trivial example, when $N=4\,,$ we have
$(1-d_3) \alpha_2 +d_3 \alpha_3=2$ and a single pivot-point $(2,2)\,.$ Thus as $d_3$ varies the line rotates, or pivots, about this point:
\begin{center}
\begin{tikzpicture}[domain=0:4]
    \draw[very thin,color=gray] (-0.2,-0.2) grid (6.2,4.2);
    \draw[->] (-0.2,0) -- (6.2,0) node[right] {$\alpha_2$};
    \draw[->] (0,-0.2) -- (0,4.2) node[above] {$\alpha_3$};
    \draw[color=blue][-] (0,4)--(4,0);
    \draw[color=blue](2.9,0.2) node{$d_3=1/2$};
    \draw[color=red][-](0,3)--(6,0);
    \draw[color=red](4.9,1.2) node{$d_3=2/3$};
    \draw[color=black](3.5,3.5) node {$(1-d_3) \alpha_2 +d_3 \alpha_3=2$};
    \draw(2,2.15) node {$\curvearrowright$};
    \fill[black](2,2) circle (2pt);
    \draw(1.5,1.7) node{$(2,2)$};
   \end{tikzpicture}
\end{center}

It is also obvious geometrically from the direction of the normal vector to $\Pi$ that the number of solutions to the Frobenius equation is finite and hence (\ref{polyf}) is a
polynomial in the variables $\{t^2,,\ldots\,,t^{N-1} \}\,.$

The number of independent pivot terms can be reduced further; in fact to one. Let $\gamma(\tau)$ be any function with the transformation
property
\[
\gamma\left(-\frac{1}{\tau}\right) = \tau^2 \gamma(\tau) - \frac{1}{4} \tau
\]
and let
\[
f({\bf z},\tau) = \gamma(\tau) ({\bf z},{\bf z})^2 + g({\bf z},\tau)\,.
\]
Then equation (\ref{transformationlaw}) implies that
\[
g\left(\frac{{\bf z}}{\tau},-\frac{1}{\tau}\right) = \frac{1}{\tau^2} g({\bf z},\tau)\,.
\]
Thus one can obtain a refinement of Proposition \ref{basicexample}:

\begin{prop}\label{basicexample2}
The prepotential of a modular Frobenius manifold takes the form
\[
F=\frac{1}{2} u^2 \tau - \frac{1}{2} u ({\bf z},{\bf z}) + \gamma(\tau) ({\bf z},{\bf z})^2 + g({\bf z},\tau)
\]
where
\[
\gamma\left(-\frac{1}{\tau}\right) = \tau^2 \gamma(\tau) - \frac{1}{4} \tau\,,\qquad g\left(\frac{{\bf z}}{\tau},-\frac{1}{\tau}\right) = \frac{1}{\tau^2} g({\bf z},\tau)\,.
\]
Moreover, if
\[
g({\bf z},\tau) = \sum_{{\bf \alpha}\in \LL\cap \Pi} \left\{ \prod_{i=2}^{N-1} (z^i)^{\alpha_i} \right\} g_{\bf \alpha}(\tau)
\]
then
\[
g_{\bf\alpha}\left(-\frac{1}{\tau}\right) = \tau^{ (\sum \alpha_i)-2 } g_{\bf \alpha}(\tau)\,.
\]
\end{prop}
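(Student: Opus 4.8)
The plan is to turn the remark made immediately before the statement into a proof, the only genuinely new content being the final clause. First I would dispose of the one non-formal point, namely that a function $\gamma$ with $\gamma(-1/\tau)=\tau^2\gamma(\tau)-\frac{1}{4}\tau$ exists at all: rescaling the second Eisenstein series as in Example~\ref{exampleChazy}, the choice $\gamma(\tau)=-\frac{\pi i}{24}E_2(\tau)$ has exactly this transformation law. Fix any such $\gamma$ and set $g:=f-\gamma(\tau)({\bf z},{\bf z})^2$.

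Next I would substitute this decomposition into the modularity identity \eqref{transformationlaw} furnished by Proposition~\ref{basicexample}. The only inputs needed are the bilinearity of $(\cdot,\cdot)$, which gives $\left(\frac{{\bf z}}{\tau},\frac{{\bf z}}{\tau}\right)^2=\frac{1}{\tau^4}({\bf z},{\bf z})^2$, together with the transformation law for $\gamma$, which rewrites $\gamma(-1/\tau)\,\frac{1}{\tau^4}({\bf z},{\bf z})^2$ as $\frac{1}{\tau^2}\gamma(\tau)({\bf z},{\bf z})^2-\frac{1}{4\tau^3}({\bf z},{\bf z})^2$. The first term reproduces the $\gamma$-part of the right-hand side of \eqref{transformationlaw} and the second cancels the anomalous term $-\frac{1}{4\tau^3}({\bf z},{\bf z})^2$ there, so what survives is $g\left(\frac{{\bf z}}{\tau},-\frac{1}{\tau}\right)=\frac{1}{\tau^2}g({\bf z},\tau)$, which is the first half of the stated $\gamma$--$g$ transformation laws. (Equivalently: since every pivot-point has $\sum_i\alpha_i=4$, subtracting $\gamma(\tau)({\bf z},{\bf z})^2$ precisely absorbs into $\gamma$ the anomalous term $-\frac{1}{4}\tau\,{\rm coeff}_{\bf\alpha}(({\bf z},{\bf z})^2)$ attached to each pivot-point, so after the subtraction the pivot-point transformation law recorded earlier collapses to the non-pivot one.)

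For the last clause I would expand $g$ in monomials in ${\bf z}$. Quasi-homogeneity $E(F)=2F$ forces each monomial of $f$, hence of $g$ -- because $({\bf z},{\bf z})^2$ is itself quasi-homogeneous of degree $2$ -- to satisfy the Frobenius equation \eqref{FrobeniusEquation}, i.e.\ to lie on the hyperplane $\Pi$; polynomiality in ${\bf z}$ confines the exponent vectors to $\LL=\ZZ_{\geq 0}^{N-2}$; and $\LL\cap\Pi$ is finite, so $g({\bf z},\tau)=\sum_{{\bf\alpha}\in\LL\cap\Pi}\left\{\prod_{i=2}^{N-1}(z^i)^{\alpha_i}\right\}g_{\bf\alpha}(\tau)$ is a finite sum. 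Applying $g\left(\frac{{\bf z}}{\tau},-\frac{1}{\tau}\right)=\frac{1}{\tau^2}g({\bf z},\tau)$ to this expansion pulls a factor $\tau^{-\sum_i\alpha_i}$ out of the $\bf\alpha$-th term, and comparing coefficients of the linearly independent monomials $\prod_i(z^i)^{\alpha_i}$ yields $\tau^{-\sum_i\alpha_i}g_{\bf\alpha}(-1/\tau)=\frac{1}{\tau^2}g_{\bf\alpha}(\tau)$, i.e.\ the asserted $g_{\bf\alpha}(-1/\tau)=\tau^{(\sum\alpha_i)-2}g_{\bf\alpha}(\tau)$. There is no real obstacle here: the crux is the single observation that the weight-$\tau$ anomaly of $\gamma$ rescales to exactly the $-\frac{1}{4\tau^3}({\bf z},{\bf z})^2$ anomaly in \eqref{transformationlaw}, after which everything is coefficient bookkeeping, the only step needing a word being the linear independence of distinct monomials used in the final comparison.
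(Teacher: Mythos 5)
Your proposal is correct and follows essentially the same route as the paper: subtract a fixed quasi-modular function $\gamma(\tau)({\bf z},{\bf z})^2$ (the paper simply posits "any function" with the anomalous law, which you instantiate via $E_2$), cancel the $-\frac{1}{4\tau^3}({\bf z},{\bf z})^2$ anomaly in \eqref{transformationlaw} to get the clean law for $g$, and then combine quasi-homogeneity (the Frobenius equation \eqref{FrobeniusEquation}) with coefficient comparison of independent monomials to obtain the weights of the $g_{\bf\alpha}$. The only remark worth making is that the precise multiple of $E_2$ is convention-dependent; all the argument needs is that $E_2$'s anomaly is a nonzero multiple of $\tau$, which you use correctly.
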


Thus modularity and quasi-homogeneity determine the modularity properties of the functions $\gamma$ and $g_{\bf\alpha}$ together
with the form of the monomial coefficients of these functions. To fix these functions one must now solve the WDVV equations
(and by construction any solution will give a Frobenius manifold).

\medskip

\textbf{Remark}. Once the Euler vector field is fixed, the solutions of the Frobenius equation \eqref{FrobeniusEquation}
determine the form of the prepotential. Imposing the WDVV equations then gives systems of over-determined non-linear ordinary
differential equations (in the variable $t^N$). It is these systems, determined by the specified $d_i$ (subject, of course, to the constraint
$d_i+d_{N+1-i}=1$) that are studied in this paper.  These equations will possess similar properties to the Chazy equation in Example \ref{exampleChazy}, their solutions
will have an $SL(2,\mathbb{C})$ symmetry. To get a modular Frobenius manifold one then looks for
solutions which have an invariance under $SL(2,\mathbb{Z}).$ In addition, if one then imposes the condition of semi-simplicity on the resulting multiplication one finds
that this then leads to very strong constraints on the possible $d_i\,.$

\medskip

Another functional class in which to look for modular manifolds could be obtained by weakening the polynomial ansatz and replacing it with a rational
ansatz (e.g rational in the variable $t^{N-1}$ and polynomial in the variables $t^1\,,\ldots\,,t^{N-2}$) - this would then include the $A_{N-2}$ and $B_{N-2}$ examples of Bertola \cite{B}\,.
These examples, while rational, are constrained by the condition that the functions $g_{\alpha}(\tau)$ are never of negative weight, i.e. $\sum \alpha_i \geq 2\,.$ Such a generalization will not
be pursued here.

\section{Solutions of the WDVV equations}\label{WDVVsection}

Proposition \ref{basicexample2} gives the forms of the prepotential of a modular Frobenius manifold: the precise form of the
modular functions that appear cannot be obtained from this analysis - this is fixed by imposing the WDVV equations of
associativity. These equations decompose into three classes which depend on the number of $\tau$-derivatives. Consider the
obstructions to associativity:
\[
\Delta[X,Y,Z]=(X\circ Y)\circ Z - X \circ (Y \circ Z)\,.
\]
Since in the case being considered we have a unity element these simplify further: if any of the vector fields
are equal to the unity field then $\Delta$ vanishes identically. Since the vector field $\partial_\tau$
is special (for example, it behaves differently to the other variables under modularity transformation), we decompose these equations further, taking the inner product with arbitrary vector
fields to obtain scalar-valued equations.

\begin{prop}\label{wdvvbreakdown} The WDVV equations for a multiplication with unity field are equivalent to
the vanishing of the following functions:
\[
\begin{array}{ccl}
\Delta^{(1)}({\bf u},{\bf v}) & = & \eta(\partial_\tau \circ \partial_\tau, {\bf u} \circ {\bf v}) - \eta(\partial_\tau \circ {\bf u}, \partial_\tau \circ {\bf v}) \,, \\ &&\\
\Delta^{(2)}({\bf u},{\bf v},{\bf w}) & = & \eta(\partial_\tau \circ {\bf u}, {\bf v} \circ {\bf w}) - \eta(\partial_\tau \circ {\bf w}, {\bf u} \circ {\bf v})\,,\\ &&\\
\Delta^{(3)}({\bf u},{\bf v},{\bf w},{\bf x}) & = & \eta({\bf u} \circ {\bf v},{\bf w}\circ {\bf x}) - \eta({\bf u} \circ {\bf x}, {\bf v} \circ {\bf w})\,
\end{array}
\]
for all ${\bf u}\,,{\bf v}\,,{\bf w}\,,{\bf x}\in  {\rm span}\{\partial_{t^i}\,i=2\,,\ldots\,,N-1\}\,.$
\end{prop}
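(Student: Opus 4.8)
The plan is to start from the WDVV equations in the form $c_{\alpha\beta\lambda}\eta^{\lambda\mu}c_{\mu\delta\gamma} = c_{\delta\beta\lambda}\eta^{\lambda\mu}c_{\mu\alpha\gamma}$, rewrite them invariantly as the vanishing of the associator $\Delta[X,Y,Z] = (X\circ Y)\circ Z - X\circ(Y\circ Z)$ for all tangent vectors, and then exploit the fact that the metric $\eta$ is non-degenerate so that the tensor equation $\Delta[X,Y,Z]=0$ for all $X,Y,Z$ is equivalent to the scalar equation $\eta(\Delta[X,Y,Z],W)=0$ for all $X,Y,Z,W$. First I would observe that, because $\circ$ is commutative and $\eta$ satisfies the Frobenius condition $\eta(X\circ Y,Z)=\eta(X,Y\circ Z)$, the four-tensor $T(X,Y,Z,W):=\eta((X\circ Y)\circ Z,W)$ is totally symmetric, and hence $\eta(\Delta[X,Y,Z],W)=T(X,Y,Z,W)-T(X,Z,W,Y)=\eta(X\circ Y,Z\circ W)-\eta(X\circ Z,Y\circ W)$; this is the content of the three $\Delta^{(i)}$ being special cases of a single symmetric object once one inserts $\partial_\tau$ into various slots.

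Next I would use the unity: since $e=\partial_1$ acts as the identity for $\circ$, we have $\Delta[e,Y,Z]=0$ identically, so it suffices to test the associator on a basis of $T_t\cM$ that omits $e$. The standard flat basis is $\{\partial_1=\partial_u, \partial_\tau, \partial_{t^i}\ (i=2,\dots,N-1)\}$; dropping $\partial_u$ leaves $\partial_\tau$ together with the $\partial_{t^i}$. The key step is then a case split according to how many of the four entries $X,Y,Z,W$ in $\eta(X\circ Y,Z\circ W)-\eta(X\circ Z,Y\circ W)$ are equal to $\partial_\tau$: if two or more of them are $\partial_\tau$ one can, using the symmetry of the four-tensor and commutativity, reduce to the case where the two $\partial_\tau$'s sit in a single $\circ$-product, giving $\Delta^{(1)}(\mathbf{u},\mathbf{v})=\eta(\partial_\tau\circ\partial_\tau,\mathbf{u}\circ\mathbf{v})-\eta(\partial_\tau\circ\mathbf{u},\partial_\tau\circ\mathbf{v})$; if exactly one entry is $\partial_\tau$ one gets, after using the symmetries to move it into a prescribed slot, $\Delta^{(2)}(\mathbf{u},\mathbf{v},\mathbf{w})=\eta(\partial_\tau\circ\mathbf{u},\mathbf{v}\circ\mathbf{w})-\eta(\partial_\tau\circ\mathbf{w},\mathbf{u}\circ\mathbf{v})$; and if no entry is $\partial_\tau$ one gets $\Delta^{(3)}(\mathbf{u},\mathbf{v},\mathbf{w},\mathbf{x})$ directly. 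One must also check the bookkeeping cases where three or four entries equal $\partial_\tau$ collapse into $\Delta^{(1)}$ with $\mathbf{u}$ or $\mathbf{v}$ taken to be $\partial_\tau$ — but $\partial_\tau$ lies in $\mathrm{span}\{\partial_{t^i}\}$? No: $\partial_\tau=\partial_{t^N}$ is not among the $\partial_{t^i}$, $i=2,\dots,N-1$, so these cases need to be handled by noting that $\eta(\partial_\tau\circ\partial_\tau,\partial_\tau\circ\mathbf{v})-\eta(\partial_\tau\circ\partial_\tau,\partial_\tau\circ\mathbf{v})$ vanishes trivially, and similarly for four $\partial_\tau$'s; so no extra equation beyond $\Delta^{(1)},\Delta^{(2)},\Delta^{(3)}$ is produced.

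Conversely, given that all three families vanish, I would reassemble them: any four flat basis vectors without $e$ are a multiset drawn from $\{\partial_\tau\}\cup\{\partial_{t^i}\}$, and by the case count above the associated scalar WDVV expression is, up to the symmetries already established, one of $\Delta^{(1)},\Delta^{(2)},\Delta^{(3)}$ (or trivially zero). Since these span all of the scalar WDVV relations and $\eta$ is non-degenerate, associativity of $\circ$ follows, and hence the full WDVV system. The main obstacle I anticipate is purely combinatorial rather than conceptual: carefully verifying that the symmetry group of the four-tensor (generated by commutativity of $\circ$ in each product and the Frobenius relation swapping a product across $\eta$) acts transitively enough on placements of the $\partial_\tau$'s that exactly these three normal forms — and no fourth independent one — arise, including a clean treatment of the degenerate multiplicities. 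This is the step where one has to be scrupulous, but it is a finite check and I would organize it by the number of $\partial_\tau$-insertions as above.
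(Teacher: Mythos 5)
Your route is essentially the paper's own (the paper leaves this as the computation sketched in the paragraph preceding the statement): discard all cases involving the unity, pair the associator with a fourth vector via the non-degenerate metric, and sort the resulting scalar obstructions by the number of $\partial_\tau$ insertions, which produces exactly $\Delta^{(1)},\Delta^{(2)},\Delta^{(3)}$ and nothing else. However, one step as you wrote it is false and, in the converse direction, circular: the four-tensor $T(X,Y,Z,W)=\eta((X\circ Y)\circ Z,W)=\eta(X\circ Y,Z\circ W)$ is \emph{not} totally symmetric as a consequence of commutativity and the Frobenius property alone -- its total symmetry is precisely equivalent to associativity, i.e.\ to the WDVV equations you are trying to characterize. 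What commutativity, the symmetry of $\eta$ and the Frobenius condition do give is the dihedral symmetry: invariance under $X\leftrightarrow Y$, $Z\leftrightarrow W$ and $(X,Y)\leftrightarrow (Z,W)$, so that $\eta(X\circ Y,Z\circ W)$ depends only on the pair-partition $\{XY\,|\,ZW\}$. Correspondingly your displayed identity should read $\eta(\Delta[X,Y,Z],W)=\eta(X\circ Y,Z\circ W)-\eta(Y\circ Z,X\circ W)$, a difference of the pairings $\{XY\,|\,ZW\}$ and $\{YZ\,|\,XW\}$; writing it as $\{XY\,|\,ZW\}-\{XZ\,|\,YW\}$ already presupposes the associativity you have not yet established.

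The argument survives once this is repaired, because your case count needs only the dihedral symmetries: for entries $(\partial_\tau,\partial_\tau,{\bf u},{\bf v})$ there are just two distinct pairings and their difference is $\Delta^{(1)}({\bf u},{\bf v})$; for $(\partial_\tau,{\bf u},{\bf v},{\bf w})$ the pairwise differences of the three pairings are instances of $\Delta^{(2)}$ under permutation of ${\bf u},{\bf v},{\bf w}$; four vectors from the span give $\Delta^{(3)}$; and three or four copies of $\partial_\tau$ make all pairings coincide, so no extra equation arises, exactly as you say. One further bookkeeping point: to invoke non-degeneracy of $\eta$ the fourth slot must run over a full basis, which includes $e=\partial_1$; the missing pairing $\eta(\Delta[X,Y,Z],e)$ vanishes identically by the Frobenius condition and commutativity, so no condition is lost by restricting ${\bf u},{\bf v},{\bf w},{\bf x}$ (and $\partial_\tau$) as in the statement. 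With these corrections your proposal is a complete proof and coincides with the paper's intended direct verification.
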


\noindent In terms of coordinate vector fields these conditions are:
\begin{eqnarray*}
\Delta^{(1)}_{ij} &=&\eta_{ij}\, c_{\tau\tau\tau} + \eta^{pq} \left\{ c_{\tau\tau p} \,c_{ijq} - c_{\tau ip} \,c_{\tau jq}\right\}\,, \\
\Delta^{(2)}_{ijk} & = & \left\{ \eta_{jk}\, c_{\tau\tau i} - \eta_{ij} \,c_{\tau\tau k} \right\} +
\eta^{pq} \left\{ c_{\tau i p}\, c_{jkq} - c_{\tau kp}\, c_{ijq}\right\}\,, \\
\Delta^{(3)}_{ijrs} & = & \left\{ \eta_{ij} \, c_{\tau rs} + \eta_{rs} \, c_{\tau ij}-\eta_{is} \, c_{\tau rj}-\eta_{rj} \, c_{\tau is}\right\} +
\eta^{pq} \left\{ c_{ijp} \,c_{rsq} - c_{isp} \,c_{rjq}\right\}\,
\end{eqnarray*}
where $\eta_{ij}=-(\partial_i,\partial_j)\,.$

By judicious choice of variables one can obtain first order evolution equations for the unknown fields. However these systems are
overdetermined, so one has to analyze various cases and subcases very carefully.

\subsection{$N=3$ solutions}

In this simplest case (see Example \ref{exampleChazy}) there is no freedom: the Euler vector field is fixed by the fact that $d=1$ and hence

\[
E=t^1 \frac{\partial\phantom{t^1}}{\partial t^1}+ \frac{1}{2} t^2 \frac{\partial\phantom{t^2}}{\partial t^2}
\]
The Frobenius equation (\ref{FrobeniusEquation}) only has one solution and hence modularity and quasi-homogeneity
imply a polynomial prepotential of the form\footnote{In examples {\sl only} we lower indices for notational convenience, i.e. in expressions for the prepotentials in
specific examples, $t_i=t^i\,.$}
\[
F= \frac{1}{2} t_1^2 t_3 + \frac{1}{2} t_1 t_2^2 - \frac{1}{16} t_2^4 \gamma(t_3)\,.
\]
The WDVV equations then imply that $\gamma$ must satisfy the third order equation
\begin{equation}
\gamma^{'''} - 6 \gamma \gamma^{''} + 9 (\gamma^{'})^2=0\,.
\label{chazy}
\end{equation}
This is nothing more than the Chazy equation, whose modularity properties are well known (see \cite{dubrovin1}).
This equation falls into Chazy's class (see appendix) and its solutions may be written in terms of the Schwarzian
triangle function $S[\frac{1}{2},\frac{1}{3},0,t]\,.$

\subsection{$N=4$ solutions}

In this case there is a $1$-parameter family of possible Euler fields, namely
\[
E=t^1 \frac{\partial\phantom{t^1}}{\partial t^1}+ (1-\sigma) t^2 \frac{\partial\phantom{t^2}}{\partial t^2}+
\sigma t^3 \frac{\partial\phantom{t^3}}{\partial t^3}.
\]
Without loss of generality we may assume decreasing degrees and hence $\sigma\leq \frac{1}{2}\,.$ A detailed analysis of the Frobenius equation (\ref{FrobeniusEquation}) gives the
solutions summarized in Table 1.

\begin{table}
\begin{center}
\begin{tabular}{c|c}
$\sigma$& Non-pivot solutions $(\alpha_2,\alpha_3)$: \\ \hline
$\frac{1}{2}$ & $(4,0)\,,(3,1)\,,(1,3)\,,(0,4)\,;$\\
& \\
$\frac{1}{3}$ & $(3,0)\,,(1,4)\,,(0,6)\,;$\\
& \\
$\frac{1}{n}\,,(n\geq 4)$ & $(1,n+1)\,,(0,2n)\,;$\\
& \\
$\frac{2}{2n+1}\,,(n\geq 2)$ & $(0,2n+1)\,;$\\
& \\
arbitrary & none\,.
\end{tabular}
\end{center}
\vskip 5mm
\caption{Non-pivot solutions of the Frobenius equation $( \ref{FrobeniusEquation} )$ for $N=4\,.$}
\end{table}

In all cases except  $\sigma\in\{\frac{1}{2}\,,\frac{1}{3}\}$ the resulting solution of the WDVV equations
gives a nilpotent algebra - one obtains Euler-type differential equations which may be solved explicitly,
and the solution checked to see if the family of algebras it defines is semi-simple or not. The two remaining cases
require a more detailed analysis.

\subsubsection{$N=4\,,\sigma=\frac{1}{2}\,$}

In this case we take
\[
F=\frac{1}{2} t_1^2 t_4 +t_1 t_2 t_3 - \frac{1}{4} (t_2 t_3)^2 \gamma(t_4) +
\left\{ t_3^4 \,g_1(t_4)  + t_2 t_3^3 \,g_2(t_4) + t_2^3 t_3 \,g_3(t_4) + t_2^4 \,g_4(t_4)  \right\}\,.
\]
The analysis of the WDVV equations splits into two subcases (there are other subcases which appear in the analysis, but these yield non-semi-simple solutions):

\medskip

\noindent{\underline{Case I:}} $g_2=0\,,g_3=0$

\medskip

\noindent One finds that $g_4 = \mu g_1$ where $\mu$ is a constant and the evolution of $g_1$ and $\gamma$ are given by the equations:

\begin{equation}
\begin{array}{rcl}
\gamma^\prime & = & \displaystyle{\frac{1}{2}} \gamma^2 - 288 \mu g_1^2\,,\\
g_1^{\prime\prime} & = & 3 \gamma \, g_1^\prime - 3 g_1 \gamma^\prime\,.
\end{array}
\label{CaseI}
\end{equation}
\medskip

\noindent On eliminating $g_1$ one obtains a third-order scalar equation

\begin{equation}
{\dddot{\gamma}}= \frac{1}{2} \frac{(\ddot{\gamma}- 2 \gamma \dot{\gamma})^2}{\dot{\gamma} - \gamma^2} +8 \gamma \ddot{\gamma} -10  \dot{\gamma}^2\,.
\label{caseIueqn}
\end{equation}
Here the independent variable has been rescaled, $t=\frac{1}{2} t_4\,,$ and hence $\gamma'=\frac{1}{2} \dot{\gamma}\,$ etc..
This falls within Bureau's class (see appendix) and its solutions are given in terms of the
Schwarzian triangle function $y(t)=S[\frac{1}{2},\frac{1}{4},0,t]\,$, namely
\begin{eqnarray*}
\gamma(t) & = & \frac{1}{2} \left\{ \frac{\ddot{y}}{\dot{y}} - \left(\frac{1/2}{y} + \frac{3/4}{y-1}\right) \dot{y}\right\}\,,\\
& = & \frac{1}{2} \frac{d\phantom{t}}{dt} \log \left\{ \frac{\dot{y}}{y^\frac{1}{2} (y-1)^\frac{3}{4}}\right\}\,,\\
g_1(t) & = & \frac{1}{192 \mu^\frac{1}{2}} \, \frac{1}{y^\frac{1}{2} (y-1)^\frac{1}{2}} \dot{y}\,.
\end{eqnarray*}

An alternative way to solve (\ref{caseIueqn}) (following Satake \cite{SatakeD4} and section \ref{foldingsection}) is to express the solutions in terms of solutions to the Halphen system
\begin{eqnarray*}
\dot{\omega}_1 & = & - \omega_2 \omega_3 + \omega_1(\omega_2+\omega_3)\,,\\
\dot{\omega}_2 & = & - \omega_3 \omega_1 + \omega_2(\omega_3+\omega_1)\,,\\
\dot{\omega}_3 & = & - \omega_1 \omega_2 + \omega_3(\omega_1+\omega_2)\,.\\
\end{eqnarray*}
One may show that the solution is:
\begin{eqnarray*}
\gamma(t) & = & \displaystyle{\frac{1}{4}(\omega_1 + 2 \omega_2 + \omega_3)}\,,\\
g_1(t) & = & \displaystyle{\frac{1}{96 \mu^\frac{1}{2}}(\omega_1 - 2 \omega_2 + \omega_3)}\,.
\end{eqnarray*}
The required modularity properties of $\gamma$ and $g_1$ then follow automatically from the known modularity properties of the solution to the Halphen system.

\medskip

\noindent{\underline{Case II:}} $g_2\neq0\,,g_3\neq0$

\medskip

\noindent One finds that $g_2=\frac{4}{K} g_4\,,g_3= K g_1$ where $K$ is a constant and the evolution of $g_1\,,g_4$ and $\gamma$ are given by the equations:

\begin{equation}
\begin{array}{rcl}
\gamma^\prime & = & \displaystyle{\frac{1}{2}} \gamma^2 - 288 g_1 g_4\,,\\
g_1^\prime & = & \gamma \, g_1 + 24 K g_4^2\,,\\
g_4^\prime & = & \gamma \, g_4 + 24 K^{-1} g_1^2\,.
\end{array}
\label{CaseII}
\end{equation}

\noindent Eliminating $g_1$ and $g_4$ yields the Chazy equation \eqref{chazy}.
From the expressions for $\gamma^\prime$ and $\gamma^{\prime\prime}$ one can easily obtain two algebraic relations connecting $g_1$ and $g_4$ to the $\gamma$.
Hence one can obtain,
by solving these algebraic equations, the general solution in this subcase. Since it is well known that the Halphen system is equivalent to the Chazy equation, one may also
express the solution in terms of the $\omega_i\,.$
One can also show, without first having to solve the equations, that the solution is semi-simple.

\subsubsection{$N=4\,,\sigma=\frac{1}{3}\,$}

In this case we take
\[
F=\frac{1}{2} t_1^2 t_4 +t_1 t_2 t_3 - \frac{1}{4} (t_2 t_3)^2 \gamma(t_4) +
\left\{ t_3^6 \,g_4(t_4)  + t_2 t_3^4\, g_3(t_4) + t_2^3\, g_1(t_4)  \right\}\,.
\]
The analysis of the WDVV equations gives the single system (there are other subcases which appear in the analysis, but these yield non-semi-simple solutions):

\begin{equation}
\begin{array}{rcl}
\gamma^\prime & = & \displaystyle{\frac{1}{2}} \gamma^2 - 72 K g_1^4\,,\\
g_1^{\prime\prime} & = & 2 \gamma \, g_1^\prime -  g_1 \gamma^\prime\,
\end{array}
\label{CaseIII}
\end{equation}
where $K$ is a constant and
\[
g_3 = K g_1^3\,,\quad g_4 = \frac{K g_1}{30} \left[ g_1^\prime - \frac{1}{2} g_1 \gamma\right]\,.
\]

\noindent On eliminating $g_1$ one obtains a third-order scalar equation
\[
{\dddot{\gamma}}= \frac{3}{4} \frac{(\ddot{\gamma}- 2 \gamma \dot{\gamma})^2}{\dot{\gamma} - \gamma^2} +6 \gamma \ddot{\gamma} -6  \dot{\gamma}^2\,.
\]
Here the independent variable has been rescaled, $t=\frac{1}{2} t_4\,,$ and hence $\gamma'=\frac{1}{2} \dot{\gamma}\,$ etc..
This falls within Bureau's class (see appendix) and its solutions are given in term of the
Schwarzian triangle function $y(t)=S[\frac{1}{2},\frac{1}{6},0,t]\,.$

\begin{eqnarray*}
u(t) & = & \frac{1}{2} \left\{ \frac{\ddot{y}}{\dot{y}} - \left(\frac{1/2}{y} + \frac{5/6}{y-1}\right) \dot{y}\right\}\,,\\
& = & \frac{1}{2} \frac{d\phantom{t}}{dt} \log \left\{ \frac{\dot{y}}{y^\frac{1}{2} (y-1)^\frac{5}{6}}\right\}\,.
\end{eqnarray*}
Given this solution one may easily find the remaining functions: they all take the schematic form
\[
g_i(t) = \frac{c_i}{y^{a_i}(y-1)^{b_i}} \left(\dot{y}\right)^\frac{i}{2}\,,\qquad i=1\,,3\,,4
\]
for various constants $a_i\,,b_i\,,c_i\,.$

\subsection{$N=5$ solutions}

In this case there is again a $1$-parameter family of Euler vector fields,
\[
E=t^1 \frac{\partial\phantom{t^1}}{\partial t^1}+ (1-\sigma) t^2 \frac{\partial\phantom{t^2}}{\partial t^2}+
\frac{1}{2} t^3 \frac{\partial\phantom{t^3}}{\partial t^3}+\sigma t^4 \frac{\partial\phantom{t^4}}{\partial t^4}.
\]
In this case we have 3 pivot points which correspond to the monomials $\{ (t_2t_4)^2, t_2(t_3)^2t_4, (t_3)^4 \}$. Again, without loss of generality we may assume decreasing
degrees and hence $\sigma\leq \frac{1}{2}\,.$ Analyzing the Frobenius equation \eqref{FrobeniusEquation}, we find the non-pivot terms summarized in table 2.
\begin{table}
\begin{center}
\begin{tabular}{c|c}
$\sigma$& Non-pivot solutions $(\alpha_2,\alpha_3,\alpha_4)$: \\ \hline \\
$\frac{1}{2}$ & $(4,0,0)\,,(3,0,1)\,,(1,0,3)\,,(0,0,4)\,$\\ & $(3,1,0)\,,(2,1,1)\,,(1,1,2)\,,(0,1,3)\,$ \\ & $(2,2,0)\,,(0,2,2)\,$ \\ & $(1,3,0)\,,(0,3,1);$\\
& \\
$\frac{1}{3}$ & $(3,0,0)\,,(1,0,4)\,,(0,0,6)\,(0,2,3);$\\
& \\
$\frac{1}{4}$ & $(1,0,5)\,,(0,0,8)\,$ \\ & $(2,1,0)\,,(1,1,3)\,,(0,1,6)\,$ \\& $(0,2,4)\,,(0,3,2);$\\
& \\
$\frac{1}{n}\,,(n\geq 5\,,{\rm and ~odd})$ & $(1,0,n+1)\,,(0,0,2n)\,(0,2,n);$\\
& \\
$\frac{2}{2n+1}\,,(n\geq 2)$ & $(0,0,2n+1)\,;$\\
& \\
$\frac{1}{2n}\,, (n\geq 2)$ & $(1,0,2n+1)\,,(0,0,4n)\,, (1,1,n+1)\,, (0,1,3n)\,$ \\  & $(0,2,2n)\,, (0,3,n)\:$\\
& \\
$\frac{3}{2n}\,, (n\geq 4)$ & $(0,1,n)\:$\\
& \\
arbitrary & none\,.
\end{tabular}
\end{center}
\vskip 5mm
\caption{Non-pivot solutions of the Frobenius equation $( \ref{FrobeniusEquation} )$ for $N=5\,.$}
\end{table}

Once our polynomial ansatz is made, WDVV reduces to systems of ordinary differential equations for the pivot and non-pivot functions. In all cases, except for $\sigma = 1/2$, these equations
typically fall into one of two types:
\begin{itemize}
\item[1.]{second order Euler-type differential equations that may be integrated explicitly to yield a non-semi-simple solution to WDVV.}
\item[2.]{first order equations of the form
\begin{equation}\label{zerocurv}
\gamma' - \frac{1}{2}\gamma^2 =0,
\end{equation}
where $\gamma$ is a pivot term.
}
\end{itemize}
\noindent In such cases the resulting Frobenius manifold is non-semi-simple and therefore $\sigma=1/2$ is the only value for which one may hope to find a semi-simple
modular Frobenius manifold. Note that this is the most symmetric case - the prepotential contains 15 unknown functions (3 pivot and 12 non-pivot) and homogeneity means
that the overdetermined system to be solved is far more complicated than in other examples. We note, though, two solutions which have appeared in the literature.

\begin{example}\label{MPsolution}
The following example (rescaled slightly) was found in \cite{MartiniPost}.
\begin{eqnarray*}
F & = & \displaystyle{\frac{1}{2} t_1^2 t_5 + t_1 \left(t_2 t_4 + \frac{1}{2} t_3^2 \right) - \frac{1}{4} \left(t_2 t_4 + \frac{1}{2} t_3^2 \right)^2 \gamma(t_5)}\\
& & \displaystyle{ +\frac{1}{24} \left( t_2^4 - 2 t_2 t_3^3 - 2 t_2 t_4^3 + 3 t_3^2 t_4^2 \right) h_1(t_5) }\displaystyle{ +\frac{1}{24} \left( t_4^4 - 2 t_4 t_3^3 - 2 t_2^3 t_4 + 3 t_2^2 t_3^2
\right)
h_2(t_5) }\\
\end{eqnarray*}
where
\begin{eqnarray*}
\gamma^\prime & = & \frac{1}{2} \gamma^2-\frac{1}{2} h_1 h_2\,,\\
h_1^\prime & = & \gamma h_1 + h_2^2\,,\\
h_2^\prime & = & \gamma h_2 + h_1^2\,.
\end{eqnarray*}
This system reduces to the Chazy equation for the function $\gamma\,.$

\end{example}

\begin{example}\label{SatakeFolded}
The following example may be obtained by \lq folding\rq~the $D_4^{(1,1)}$ solution found by Satake \cite{SatakeD4} (see section \ref{foldingsection}),
reducing a six dimensional solution to a five dimensional solution.
\begin{eqnarray*}
F & = & \displaystyle{\frac{1}{2} t_1^2 t_5 + \frac{1}{2} t_1 \left( t_2^2 + t_3^2 + t_4^2 \right) - \frac{1}{16} \left( t_2^2 + t_3^2 + t_4^2 \right)^2 \gamma(t_5)}\\
& & + \displaystyle{\left( t_2^4 + t_3^4 - 2 t_3^2 t_4^2 - 2 t_2^2 t_3^2 - 2 t_2^2 t_4^2\right) g_1(t_5) + \frac{1}{2} (t_2 t_3 t_4^2) g_2(t_5)\,,}\\
\end{eqnarray*}
where
\begin{eqnarray*}
\gamma^\prime & = & \displaystyle{ \frac{1}{2} \gamma^2 - \frac{2}{3} g_2^2 -128 g_1^2 }\,,\\
g_1^\prime & = & \displaystyle{ \gamma g_1 - \frac{1}{12} g_2^2 + 16 g_1^2 } \,,\\
g_2^\prime & = & \displaystyle{ \gamma g_2 - 32 g_1 g_2}\,.
\end{eqnarray*}
This system too reduces to the Chazy equation for the function $\gamma\,.$
\end{example}
Note that the two systems in these examples are inequivalent - the corresponding quadratic
algebras are non-isomorphic (see section \ref{modulardynamialsection} and \cite{Ohyama}).

\section{Modular almost-dual solutions}

In \cite{dubrovin2} Dubrovin introduced the notion of an almost-dual Frobenius manifold. Consider the vector field $E^{-1}$ defined by the condition
\[
E^{-1} \circ E = e\,.
\]
This is defined on $M^{\star}= M\backslash \Sigma\,,$ where $\Sigma$ is the discriminant submanifold
on which $E^{-1}$ is undefined. With this field one may define a new \lq dual\rq~multiplication
$\star: TM^\star \times TM^\star \rightarrow TM^\star$ by
\[
X \star Y = E^{-1} \circ X \circ Y\,, \qquad\qquad \forall\, X\,,Y \in TM^\star\,.
\]
This new multiplication is clearly commutative and associative, with the Euler vector field being the
unity field for the new multiplication.

Furthermore, this new multiplication is compatible with the intersection form $g$ on the Frobenius manifold,
i.e.
\[
g(X\star Y, Z) = g(X,Y\star Z)\,, \qquad\qquad \forall\, X\,,Y\,,Z \in TM^\star\,.
\]
Here $g$ is defined by the equation
\[
g(X,Y)=\eta(X\circ Y, E^{-1})\,, \qquad\qquad \forall\, X\,,Y \in TM^\star
\]
(and hence is well-defined on $M^\star\,$).
The intersection form has the important property that it is flat, and hence there exists a distinguished
coordinate system $\{ {\bf p}\}$ in which the components of the intersection form are constant. It turns
out that there exists a dual prepotential $F^\star$ such that its third derivatives give the structure
functions $c^{\star}_{ijk}$ for the dual multiplication. More precisely \cite{dubrovin2}:

\begin{theorem} Given a Frobenius manifold $M$, there exists a function $F^\star$ defined on $M^\star$
such that:
\begin{eqnarray*}
c^{\star}_{ijk} & = &
g\left( \frac{\partial~}{\partial p^i}\star \frac{\partial~}{\partial p^j}\,, \frac{\partial~}{\partial p^k}
\right) \,,\\
& = &\frac{\partial^3 F^\star}{\partial p^i \partial p^j \partial p^k}\,.
\end{eqnarray*}
Moreover, the pair $(F^\star,g)$ satisfies the WDVV-equations in the flat coordinates $\{ {\bf p} \}$ of the metric $g\,.$
\end{theorem}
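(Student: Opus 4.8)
The plan is to construct $F^\star$ directly from the dual multiplication $\star$ and the intersection form $g$, using the flatness of $g$ and the Frobenius-type symmetry of $\star$ with respect to $g$. First I would invoke flatness of the intersection form: since $g$ is a flat (non-degenerate) metric on $M^\star$, there exist local coordinates $\{p^i\}$ in which $g_{ij}=g(\partial_{p^i},\partial_{p^j})$ is constant. In these coordinates the Levi-Civita connection of $g$ is the trivial one, so covariant and partial derivatives coincide. Define the $(0,3)$-tensor $c^\star(X,Y,Z)=g(X\star Y,Z)$. Total symmetry of $c^\star$ is immediate: symmetry in the first two slots from commutativity of $\star$, and symmetry between slots two and three from the Frobenius compatibility $g(X\star Y,Z)=g(X,Y\star Z)$ established in the excerpt. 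Hence $c^\star$ is a fully symmetric $(0,3)$-tensor with constant metric $g$.

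The key step is to show that $c^\star_{ijk}$ is a (locally) gradient tensor, i.e. that $\partial_{p^l}c^\star_{ijk}$ is totally symmetric in all four indices; successive applications of the Poincar\'e lemma then yield a potential $F^\star$ with $c^\star_{ijk}=\partial_{p^i}\partial_{p^j}\partial_{p^k}F^\star$. For this I would reuse the mechanism already recalled in Section~2 for the original Frobenius structure: the condition that $\nabla^g c^\star$ be totally symmetric is equivalent, once we know $c^\star$ itself is symmetric and $g$ is flat with constant components, to the closedness of the relevant forms. Concretely, one checks that the one-parameter family of connections $\nabla^g + z\, c^\star$ (with $c^\star$ viewed with one index raised by $g^{ij}$) is flat for all $z$, or equivalently that the dual multiplication together with $g$ defines a Frobenius-type structure with $E$ as unit; the symmetry of $\nabla^g c^\star$ is exactly the integrability part of that statement. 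The cleanest route is probably to derive it from the known flatness of the deformed connection on the original Frobenius manifold $M$ via the change of multiplication $X\star Y = E^{-1}\circ X\circ Y$ and the relation $g(X,Y)=\eta(X\circ Y, E^{-1})$, transporting the first structure connection of $M$ to a flat pencil involving $\star$ and $g$. I expect this verification to be the main obstacle: it requires carefully tracking how the flatness/potentiality of the original $(\circ,\eta)$-structure, combined with the specific definitions of $E^{-1}$, $\star$ and $g$, forces $\nabla^g c^\star$ to be symmetric, and one must be attentive to the fact that everything lives on $M^\star = M\setminus\Sigma$ so that $E^{-1}$ is defined.

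Once potentiality is established, the remaining claims are formal. Associativity of $\star$ (already noted, since $\star$ is built from the associative $\circ$) together with $c^{\star}{}^i_{jk}=g^{il}c^\star_{ljk}$ and $c^\star_{ijk}=\partial_i\partial_j\partial_k F^\star$ translates into the WDVV equations \eqref{wdvv} for the pair $(F^\star,g)$ in the coordinates $\{p^i\}$, by exactly the computation recalled after Definition~2 in the excerpt (with $\eta$ replaced by $g$); note that here $g$ plays the role of the invariant metric and need not equal the Hessian $\partial_1\partial_i\partial_j F^\star$ of $F^\star$ in a distinguished coordinate, which is why this is called an \emph{almost}-dual structure rather than a genuine Frobenius manifold. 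I would close by remarking that $F^\star$ is determined only up to addition of a quadratic polynomial in the $p^i$, coming from the ambiguity in the Poincar\'e lemma, and that the quasihomogeneity properties of $F^\star$ follow from those of $F$ together with the weights induced by $g$ on the coordinates $\{p^i\}$.
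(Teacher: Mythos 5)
You should first note that the paper does not prove this theorem at all: it is quoted verbatim from Dubrovin's almost-duality paper \cite{dubrovin2} (``More precisely \cite{dubrovin2}: \dots for full details see \cite{dubrovin2}''), so there is no in-paper argument to compare against. Judged on its own, your outline follows the standard (Dubrovin) route and the formal parts are fine: flat coordinates $\{p^i\}$ exist because $g$ is flat; $c^\star(X,Y,Z)=g(X\star Y,Z)$ is totally symmetric by commutativity of $\star$ and the $g$-invariance $g(X\star Y,Z)=g(X,Y\star Z)$; and once a potential exists, associativity of $\star$ (inherited from $\circ$) translates into WDVV for $(F^\star,g)$ exactly as in Section~2, with the correct caveat that $g$ need not be a Hessian slice of $F^\star$, which is why the structure is only ``almost'' dual.

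The genuine gap is that the one non-formal step --- total symmetry of $\partial_{p^l}c^\star_{ijk}$ in all four indices --- is only announced, not proved, and the two routes you sketch are not yet arguments. Asserting that the pencil of connections $\nabla^g + z\,c^\star$ is flat is circular as stated: the vanishing of its curvature at first order in $z$ \emph{is} the symmetry of $\nabla^g c^\star$ that you are trying to establish, so you would need an independent proof of that flatness (Dubrovin obtains it from the correspondence of deformed flat connections / twisted periods, or equivalently by passing to canonical coordinates, where $g$ is diagonal with $g_{ii}=\eta_{ii}/u^i$, $\star$ is diagonal, and the rotation coefficients of $g$ satisfy the Darboux--Egoroff equations inherited from those of $\eta$). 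Likewise, ``transporting the first structure connection of $M$'' through $X\star Y=E^{-1}\circ X\circ Y$ and $g(X,Y)=\eta(X\circ Y,E^{-1})$ requires tracking derivatives of $E^{-1}$ and the non-constancy of $g$ in the $\eta$-flat coordinates; this is precisely where the computation lives and it is omitted. So what you have is a correct plan whose crux is missing; to make it a proof you must either carry out the canonical-coordinate/rotation-coefficient verification or prove the flatness of the deformed dual connection by relating it explicitly (e.g.\ by a gauge transformation) to the known flat deformed connection of $M$, being careful that everything is restricted to $M^\star=M\setminus\Sigma$. The closing remark on quasihomogeneity of $F^\star$ is not part of the statement and can be dropped.
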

Thus starting with a Frobenius manifold one may construct an \lq almost-dual\rq~Frobenius manifold (for full details see  \cite{dubrovin2})
with prepotential $F^\star\,:$
\[
F \stackrel{\star}{\longrightarrow} F^\star
\]
Combining this with the symmetry $I$ leads to the following picture
\[
\begin{array}{ccc}
F&\overset{I}{\longrightarrow}&
{\hat F}\\
\downarrow & & \downarrow \\
F^\star
&&
{\hat F}^\star
\end{array}
\]
In \cite{MS} the induced involutive symmetry
\[
F^\star \overset{I^\star}{\longrightarrow} {\hat F}^\star\,.
\]
was constructed. It turns out that the action $I^\star$ is different for modular and non-modular Frobenius manifolds.

\begin{theorem}
Let $F$ define a Frobenius manifold and let $\hat{F}$ denote the induced manifold under the action of the symmetry $I\,.$ Let $F^\star$ and
$\hat{F^\star}$ denote the corresponding almost dual structures. The $I^\star$, the induced symmetry act as:

\begin{itemize}

\item{Case I: $d \neq 1\,:$}

\begin{eqnarray*}
{\hat p}^i & = & \frac{p^i}{t_1} \,, \qquad i=1\,,\ldots\,, N\,,\\
{\hat{g}}_{ab} & = & g_{ab} \,, \\
{\hat{F^\star}} & = & \frac{F^\star}{t_1^2}
\end{eqnarray*}
where $t_1=\frac{1}{2} g_{ab} p^a p^b\,.$

\medskip

\item{Case II: $d=1\,:$}

\begin{eqnarray*}
{\hat p}^1 & = & \frac{1}{2}\frac{p_{\sigma}p^{\sigma}}{t_1}\,,\\
{\hat p}^i & = & \frac{p^i}{t_1} \,, \qquad i=2\,,\ldots\,, N-1\,,\\
{\hat p}^N & = & -\frac{1}{t_1}\,,\\
{\hat{g}}_{ab} & = & g_{ab} \,, \\
{\hat{F^\star}}(\hat{p}) & = & (\hat{p}^N)^2 F \left( \frac{1}{2}\frac{\hat{p}_{\sigma}\hat{p}^{\sigma}}{\hat{p}^N}, -\frac{\hat{p}^{2}}{\hat{p}^N}, ... ,
-\frac{\hat{p}^{N-1}}{\hat{p}^N},   -\frac{1}{\hat{p}^N} \right) + \frac{1}{2}\hat{p}^1\hat{p}_{\sigma}\hat{p}^{\sigma}\,.
\end{eqnarray*}

\end{itemize}

\end{theorem}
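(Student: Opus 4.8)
The plan is to push the entire almost-dual construction through the symmetry $I$, using the fact that $I$ acts on the flat pencil $g-\lambda\eta$ of $M$ by a single conformal factor. Let $\Phi\colon M^{\star}\to\hat M^{\star}$ be the diffeomorphism implementing $I$, with $\hat t^{\alpha}=\hat t^{\alpha}(t)$ as in Definition \ref{symdef}(b). First I would record how the primary structures transform: from the coordinate formulas together with $\hat\eta_{\alpha\beta}=\eta_{\alpha\beta}$ a direct computation gives $\Phi^{*}\hat\eta=(t^{N})^{-2}\eta$; from \eqref{ctrans1} one gets $\Phi^{*}\hat c=(t^{N})^{-2}c$ as $(0,3)$-tensors, and combining the two forces $\Phi^{*}\hat\circ=\circ$, so that the algebra on tangent spaces is carried along by $\Phi$; and from \eqref{ehat}--\eqref{dhat} one checks $\Phi_{*}e=\hat e$ and $\Phi_{*}E=\hat E$. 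Finally, because the roots of \eqref{canondef} (the canonical coordinates) are $I$-invariant up to reordering, matching eigenvalues and eigenvectors gives $\Phi^{*}\hat g=(t^{N})^{-2}g$ as well; thus $I$ rescales the whole pencil by $(t^{N})^{-2}$.

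Passing to the dual data is then immediate. Since $E^{-1}$ is built solely from $\circ$, $E$ and $e$, the above yields $\Phi_{*}(E^{-1})=\hat E^{-1}$, hence $\Phi^{*}\hat\star=\star$ for the dual multiplications and therefore $\Phi^{*}\hat c^{\star}=(t^{N})^{-2}c^{\star}$ for the dual tensors $c^{\star}_{ijk}=g(\partial\star\partial,\partial)$. Written in components, this says that $(\hat F^{\star},\hat g,\hat{\mathbf p})$ is obtained from $(F^{\star},g,\mathbf p)$ by precisely the transformation law that \eqref{ctrans1} and \eqref{isdef} record for $(\hat F,\hat\eta,\hat t)$ versus $(F,\eta,t)$ — now carried out in the flat coordinates $\mathbf p$ of the intersection form, with the conformal factor $(t^{N})^{-2}$ re-expressed as a function of $\mathbf p$. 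To make the coordinate change and this factor explicit I would invoke the dictionary of Dubrovin's almost duality \cite{dubrovin2}: in $\mathbf p$-coordinates the Euler field is radial, and the natural quadratic form $t_{1}:=\tfrac12 g_{ab}p^{a}p^{b}$ is, in the present normalisation, the analogue of the privileged flat coordinate, while the $\mathbf p$-degrees are fixed by the Euler field. Feeding the inversion formulas for $\hat t$ through this dictionary — equivalently, determining $\hat p^{\alpha}(p)$ from the requirement that it intertwine the two Euler gradings and the conformal factor — produces the stated coordinate substitutions, and integrating $\Phi^{*}\hat c^{\star}=(t^{N})^{-2}c^{\star}$ three times, with the additive freedom pinned down by homogeneity, produces $\hat F^{\star}$.

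The split into the two cases arises only at this last step, and this is where care is needed. The dual pair $(F^{\star},g)$ is not itself in standard Frobenius normal form — its unity is $E$, not $\partial_{p^{1}}$ — so the inversion symmetry is not applicable to it verbatim; the shape of the induced $I^{\star}$ is instead governed by the $E$-weights of $F^{\star}$ and of $(t^{N})^{-2}$ re-expressed in $\mathbf p$. When $d=1$ one has $d_{N}=1-d=0$, so $t^{N}$ still occupies the distinguished degree-$0$ slot in the $\mathbf p$-picture, the conformal factor becomes the full inversion square $\bigl(\tfrac12 g_{ab}p^{a}p^{b}\bigr)^{-2}=(\hat p^{N})^{2}$, and the resulting transformation of $F^{\star}$ is exactly \eqref{isdef} applied in $\mathbf p$-coordinates — with $F$ rather than $F^{\star}$ appearing, through the relation between $F^{\star}$ and $F$ available in this case from \cite{MS}. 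When $d\neq 1$ this degree matching fails, the last flat direction is no longer privileged for the dual structure, the only surviving part of the transformation is the radial inversion $\hat p^{i}=p^{i}/t_{1}$, and one is left with $\hat F^{\star}=F^{\star}/t_{1}^{2}$. The main obstacle is to verify that these are genuinely the only two possibilities — that nothing further survives the integration and that the two homogeneity types are exhaustive; the rest is the weight bookkeeping assembled above.
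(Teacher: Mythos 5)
You should note first that the paper does not prove this theorem at all: it is quoted from \cite{MS}, where it is established by an explicit computation whose first stage matches your opening paragraph. That tensorial transport is fine: $\Phi^{*}\hat\eta=(t^{N})^{-2}\eta$, $\Phi^{*}\hat c=(t^{N})^{-2}c$ from \eqref{ctrans1}, hence $\Phi^{*}\hat\circ=\circ$, $\Phi_{*}e=\hat e$, $\Phi_{*}E=\hat E$, and then $\Phi_{*}(E^{-1})=\hat E^{-1}$ and $\Phi^{*}\hat g=(t^{N})^{-2}g$ — though the clean route to the last statement is the definition $g(X,Y)=\eta(X\circ Y,E^{-1})$ rather than matching eigenvalues of the pencil, which only works after invoking semi-simplicity.

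The genuine gap is precisely at the step you defer to ``the dictionary of Dubrovin's almost duality'', and what you do say there is backwards. The entire content of the case split is how the conformal factor $(t^{N})^{-2}=t_{1}^{-2}$ is re-expressed in the flat coordinates of $g$. Since $E=\mathrm{grad}_{g}\,t_{1}$ holds identically, and for $d\neq 1$ the periods can be normalised so that $E$ is radial in the $p$'s, one obtains $t_{1}=\tfrac12 g_{ab}p^{a}p^{b}$ exactly when $d\neq 1$; this identity is what makes $\hat p^{a}=p^{a}/t_{1}$ flat for $\hat g$ with the same Gram matrix and integrates to $\hat F^{\star}=F^{\star}/t_{1}^{2}$, i.e.\ it is the engine of Case I. When $d=1$ this derivation degenerates: the periods have degree $(1-d)/2=0$, $E$ acquires a translational component in period coordinates, and $t_{1}=t^{N}$ is itself (up to normalisation) the distinguished period $p^{N}$ — that resonance is what forces the $I$-shaped action of Case II. You instead assert that in the $d=1$ case $\bigl(\tfrac12 g_{ab}p^{a}p^{b}\bigr)^{-2}=(\hat p^{N})^{2}$, attaching the quadratic-form identity to the wrong case (for $d=1$ it is false: $\tfrac12 p_{\sigma}p^{\sigma}$ is the numerator of $\hat p^{1}$, not the denominator $t_{1}$) and leaving Case I essentially unargued. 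Two further points: the $F$ on the right-hand side of Case II is, and must be for the conclusion the paper draws (that $F^{\star}$ of a modular manifold again satisfies Proposition \ref{basicexample}), the dual prepotential $F^{\star}$; there is no general closed relation between $F^{\star}$ and $F$ that you can invoke to reconcile a literal reading. And the final integration of $\Phi^{*}\hat c^{\star}=(t^{N})^{-2}c^{\star}$ to the stated $\hat F^{\star}$ — fixing the ambiguity modulo quadratic terms and producing the extra $\tfrac12\hat p^{1}\hat p_{\sigma}\hat p^{\sigma}$ term in Case II — is the assertion of the theorem itself and cannot be dismissed as weight bookkeeping.
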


Thus if one starts with a modular Frobenius manifold the almost-dual prepotential will also satisfy Proposition \ref{basicexample}.

In all known examples (essentially the Jacobi group orbit spaces for $A_{N-2}$ and $B_{N-2}$ for arbitrary $N\,$) the (modular) almost dual
prepotential takes the form
\begin{equation}
F^\star = \frac{1}{2} u^2 \tau - \frac{1}{2} u ({\bf z},{\bf z}) + \sum_{\alpha \in \mathfrak{U}} h_\alpha f\left( (\alpha,{\bf z}) ,\tau\right)
\label{almostdualprepotentialansatz}
\end{equation}
where $f(z,\tau)$ is constructed from the elliptic trilogarithm (see \cite{iabs2}) and has the expansion
\[
\begin{array}{rcl}
\label{fseries1}
f(z,\tau) & = & \displaystyle{-\frac{1}{(2 \pi i)} \left\{ \frac{1}{2} z^2 \log z + z^2 \log \eta(\tau) \right\}} \\
&&\\
&&\displaystyle{+\frac{1}{(2\pi i)^3} \sum_{n=1}^\infty \frac{ (-1)^n E_{2n}(\tau) B_{2n} }{(2n+2)! (2n)}(2 \pi z)^{2n+2}}\,,\qquad {\rm(mod~quadratic~terms)}
\end{array}
\]
where $E_{2n}$ are Eisenstein series - and hence are modular forms not functions - and it follows from this that $f$ has the required modularity property
\[
f\left(\frac{z}{\tau}, -\frac{1}{\tau}\right) = \frac{1}{\tau^2} f(z,\tau) - \frac{1}{\tau^3} \frac{z^4}{4!}\,,\qquad {\rm(mod~quadratic~terms)}\,.
\]
The set of vectors $\mathfrak{U}$ is constructed via a Landau-Ginzburg/Hurwitz space construction \cite{RS1}. These satisfy the conditions
\begin{eqnarray*}
\sum_{\alpha\in\mathfrak{U}} h_\alpha (\alpha,{\bf z})^2 & = & 0 \,,\\
\sum_{\alpha\in\mathfrak{U}} h_\alpha (\alpha,{\bf z})^4 & = & 3 ({\bf z},{\bf z})^2 \,,\\
\end{eqnarray*}
and it then follows that the basic modularity property given in Proposition \ref{basicexample} is satisfied. Using these conditions the almost dual prepotentials
may be written in a slightly modified version of Proposition \ref{basicexample2}
\begin{eqnarray*}
F^\star & = &\displaystyle{ \frac{1}{2} u^2 \tau - \frac{1}{2} u ({\bf z},{\bf z}) - \frac{2 \pi i}{96} E_2(\tau)\, ({\bf z},{\bf z})^2 - \frac{1}{2(2\pi i)}
\sum_{\alpha\in\mathfrak{U}} h_\alpha (\alpha,{\bf z})^2 \log(\alpha.{\bf z})} \\
& & \displaystyle{+ \sum_{n=2}^\infty c_n E_{2n}(\tau) \left\{ \sum_{\alpha\in\mathfrak{U}} h_\alpha (\alpha.{\bf z})^{(2n+2)}\right\}}\,.
\end{eqnarray*}
Note the single pivot-term, proportional to $E_2(\tau)$\,.

In \cite{iabs2} sufficient conditions were given on an arbitrary set of vectors $\mathfrak{U}$ to ensure that a function of the form (\ref{almostdualprepotentialansatz}) is a
solution of the WDVV equations. However, determining
further conditions which would ensure whether or not the solution is also the almost-dual potential of a modular Frobenius manifold (or even a polynomial modular Frobenius manifold)
is a considerably harder problem. Another open problem is the construction of the almost-dual manifolds corresponding to the low-dimensional polynomial modular Frobenius manifolds
constructed in the previous section. Such a construction should be deeply connected to the theory of Jacobi forms \cite{B,Satake}.

\section{Foldings and codimension 2-toroidal Lie algebras}\label{foldingsection}

As noted in the introduction, a number of polynomial modular Frobenius manifolds have appeared in the literature before. In fact, some of the earliest
solutions of the WDVV equations fall into this class and were written down before the formal definition of a Frobenius manifold.
The space of versal unfoldings of the simple elliptic singularities, denoted by different
authors as ${\widetilde{E}}_{6,7,8}\,,E^{(1,1)}_{6,7,8}\,$ or $P_8\,,X_9\,,J_{10}\,,$
carries the natural structure of a Frobenius manifold. The prepotential and Euler vector field for the $\widetilde{E}_6$ singularity
were derived by Verlinde and Warner \cite{VW}:
\begin{eqnarray*}
F&=&\frac{1}{2} t_1^2 t_8 + t_1 (t_2 t_7 + t_3 t_6 + t_4 t_5) + t_2 t_3 t_4 f_0(t_8) + \frac{1}{6}(t_2^3 + t_3^3 + t_4^3) f_1(t_8) \\
&&+(t_2 t_3 t_6 t_7 +t_2 t_4 t_5 t_7 + t_3 t_4 t_5 t_6)f_2(t_8) + \frac{1}{2} (t_2^2t_5t_6 +t_3^2t_5t_7 + t_4^2t_6t_7)f_3(t_8)\\
&&+(t_2 t_3 t_5^2 + t_2 t_4 t_6^2 + t_3 t_4 t_7^2)f_4(t_8) + \frac{1}{4}(t_2^2 t_7^2 +  t_3^2 t_6^2 + t_4^2 t_5^2)f_5(t_8)\\
&&+\frac{1}{6} [ t_2 t_7 (t_5^3 + t_6^3) +t_3 t_6 (t_5^3 + t_7^3)+t_4 t_5 (t_6^3 + t_7^3)] f_6(t_8)\\
&&+\frac{1}{2} (t_2 t_5 t_6 t_7^2+t_3 t_5 t_6^2 t_7+t_3 t_5^2 t_6 t_7)f_7(t_8) +
\frac{1}{4} ( t_2 t_5^2 t_6^2  +t_3 t_5^2 t_7^2 +t_4 t_6^2 t_7^2 ) f_8(t_8)\\
&&+\frac{1}{24} ( t_2 t_7^4 + t_3 t_6^4 + t_4 t_5^4)f_9(t_8) + \frac{1}{36}( t_5^3 t_6^3 +t_5^3 t_7^3+t_6^3 t_7^3)f_{10}(t_8) \\
&&+\frac{1}{24} ( t_5 t_6 t_7^4 + t_5 t_6^4 t_7 + t_5^4 t_6 t_7)f_{11}(t_8) + \frac{1}{8} t_5^2 t_6^2 t_7^2 f_{12}(t_8) + \frac{1}{720} ( t_5^6+t_6^6+t_7^6) f_{13}(t_8)\,,
\end{eqnarray*}
and
\[
E=t^1 \frac{\partial\phantom{t^1}}{\partial t^1}+
\frac{2}{3} \left(t^2 \frac{\partial\phantom{t^2}}{\partial t^2}+
 t^3 \frac{\partial\phantom{t^3}}{\partial t^3}+
 t^4 \frac{\partial\phantom{t^4}}{\partial t^4}\right)+
\frac{1}{3}\left( t^5 \frac{\partial\phantom{t^5}}{\partial t^5}+
 t^6 \frac{\partial\phantom{t^6}}{\partial t^6}+
 t^7 \frac{\partial\phantom{t^7}}{\partial t^7}\right)\,.
\]
The functions $f_i$ may all be expressed in terms of the Schwarzian triangle function
$S[\frac{1}{2},\frac{1}{2},\frac{1}{6},t]\,.$ Note that $f_2$ and $f_5$ are the pivot-terms.
The prepotential for $\widetilde{E}_7$ was derived by \cite{KTS}. The prepotential for $\widetilde{E}_8$
has not been written down - one would expect it to
have over 200 terms. In principle this is just a computational exercise - one can find the
multiplication in terms of the natural deformation variables in the Jacobi ring and then use the flat
coordinates derived by Noumi and Yamada to derive the prepotential \cite{NY}. It is clear, though, that this
prepotential will fall into the class of polynomial modular Frobenius manifolds considered here. It is interesting to note that
the invariance of such solutions under an inversion symmetry was known already by Verlinde and Warner \cite{VW}, but the geometric origins of this
symmetry were not fully understood.

The connection between the underlying singularity theory and the Frobenius manifold is subtle in these examples: one has
a one parameter family of Frobenius manifolds associated with each elliptic singularity, and to construct a global theory one has to understand object such as
the primitive form as a function of this parameter (the calculations in \cite{VW} and \cite{NY} were performed at one point in this family).
For details of this global construction see the recent preprint of Milanov and Ruan \cite{MR}. From the point of view of this paper it is
hard see this one parameter family; the modular symmetry condition in Proposition \ref{basicexample2} may impose too strong conditions to see this
family. One possibility might be to use the weaker version of modular symmetry used in Example \ref{exampleChazy}.

Another polynomial modular Frobenius manifold was derived explicitly by Satake \cite{SatakeD4} and denoted $D_4^{(1,1)}\,.$ Here the
prepotential and Euler field are given by
\begin{eqnarray*}
F&=&\frac{1}{2} t_1^2 t_6 + \frac{1}{2} t_1 \left( t_2^2+t_3^2+t_4^2+t_5^2 \right) +
\frac{1}{2} \left( t_2^2+t_3^2+t_4^2+t_5^2 \right)^2\gamma(t_6)\\
&& + \left[ t_2^4+t_3^4+t_4^4+t_5^4 - 2(t_2^2 t_3^2 + t_2^2 t_4^2 + t_2^2 t_5^2 + t_3^2 t_4^2 + t_3^2 t_5^2 + t_4^2 t_5^2) \right]\, g_1(t_6)
+\left(t_2 t_3 t_4 t_5\,\right) g_2(t_6)
\end{eqnarray*}
and
\[
E=t^1 \frac{\partial\phantom{t^1}}{\partial t^1}+
\frac{1}{2} \left(t^2 \frac{\partial\phantom{t^2}}{\partial t^2}+
 t^3 \frac{\partial\phantom{t^3}}{\partial t^3}+
 t^4 \frac{\partial\phantom{t^4}}{\partial t^4}+
 t^5 \frac{\partial\phantom{t^5}}{\partial t^5}\right)
\]

It was shown by Zuber \cite{Zuber}, in the case of polynomial Frobenius manifolds, that the non-simply-laced examples
(namely $B_n\,,F_4\,,G_2\,,H_{3,4}\,,I_2(n))$ may be obtained by restricting the simply laced examples (namely
$A_n\,,D_n\,,E_{6,7,8}$) to certain hyperplanes $\Sigma=\{ t^i=0\,, i\notin I\}$ for some subset $I \subset \{1\,,\ldots\,,N\}\,.$
In order to obtain a subalgebra one requires the condition
\[
\left.c_{ij}^{\phantom{ij}k}\right|_\Sigma = 0\,,\qquad \forall i\,,j \in I\,, \forall k \notin I\,.
\]
This process may be understood in terms of automorphisms of the
corresponding Coxeter diagrams and their foldings (see, for example, Figure \ref{E6F4}). In such diagrams the arrow should be read as \lq folds to\rq\,.

\begin{figure}[htb]
\begin{center}
\leavevmode
\includegraphics[angle=270,scale=0.18]{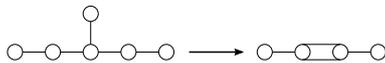}
\end{center}
\caption{The folding of $E_6$ to $F_4\,.$}
\label{E6F4}
\end{figure}

Similarly the solutions found by Verlinde and Warner \cite{VW} and Satake \cite{SatakeD4} given above also admit such foldings,
and the resulting solutions remain in the class of polynomial modular manifolds.

The $D_4^{(1,1)}$ submanifolds are\footnote{Here the subscripts on $\Sigma$ denote the codimension of the submanifold (i.e. the number of constraints on the
original, ambient, manifold). The empty set denotes the absence of any constraints and so corresponds to the original manifold. As more foldings take place the number of
constraints increase, and the dimension of the submanifold decreases.}:
\[
\begin{array}{ccccc}
&&\Sigma_{2,a}\cong \{ t_2=t_3=t_4\}&&\\
& \nearrow && \searrow & \\
\emptyset \rightarrow \Sigma_1 \cong \{t_4=t_5\} &&&&  \Sigma_3 \cong \{t_2=t_3=t_4=t_5\} \\
& \searrow && \nearrow & \\
&&\Sigma_{2,b}\cong \{ t_2=t_3\,,t_4= t_5 \}&&\\
\end{array}
\]
and the $E_6^{(1,1)}$ submanifolds are:
\[
\emptyset \rightarrow \Sigma_1 \cong \{t_2=t_3\,,t_6= t_7 \} \rightarrow \{ t_2=t_3=t_4\,,t_5=t_6=t_7 \}\,.
\]
The corresponding foldings are shown Figures \ref{D4fold} and \ref{E6fold}. Note that the $E_6^{(1,1)}$ manifold does not admit a folding to a five dimensional manifold, in
agreement with the results of section \ref{WDVVsection}. These foldings may also be understood in terms of
boundary unimodal singularities \cite{ArnoldBook,Arnold}, and in terms of the theory of 2-toroidal Lie algebras \cite{saito3,SatakeD4}. More details on the meaning of these
more complicated diagrams may be found in \cite{saito3}. However, there are subtleties in the associated invariant theory \cite{STletter}.

\begin{figure}[htb]\label{D4fold}
\begin{center}
\leavevmode
\includegraphics[angle=270,scale=0.3]{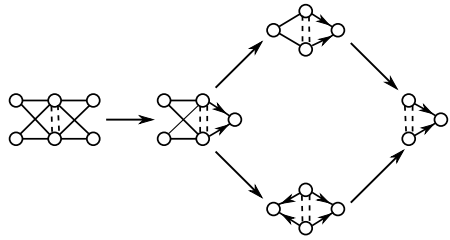}
\end{center}
\caption{The folding of $D_4^{(1,1)}\,.$}
\label{D4fold}
\end{figure}

\begin{figure}[htb]\label{E6fold}
\begin{center}
\leavevmode
\includegraphics[angle=270,scale=0.3]{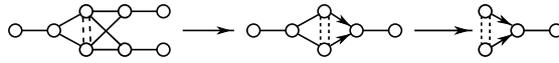}
\end{center}
\caption{The folding of $E_6^{(1,1)}\,.$}
\label{E6fold}
\end{figure}

\medskip

The folded prepotentials and the prepotentials found in section \ref{WDVVsection} do not, at first sight, agree: the
underlying triangle functions and the forms of the functions that appear are different. However, the two
different triangle functions are related by certain nonlinear transformations which have their origin in the
nonlinear Goursat identities for hypergeometric functions \cite{McKayHarnad}. For example, the relation
\[
(1-z)^{-\frac{1}{6}} {}_2F_1\left(\frac{1}{6}\,,\frac{1}{6}\,, \frac{5}{6}\,; \frac{z^2}{4(z-1)}\right) = {}_2F_1\left(
\frac{1}{3}\,,\frac{1}{3}\,,\frac{2}{3}\,;z\right)
\]
leads to the relation $\tilde{x} = \frac{x^2}{4(x-1)}$ between the Schwarzian functions $x=S[\frac{1}{3}\,,0\,,0\,,(-4)^\frac{1}{6} t]$
and $\tilde{x}=S[\frac{1}{2}\,,\frac{1}{6}\,,0\,,t]\,.$

\section{Modular dynamical systems}\label{modulardynamialsection}

The form, and modular invariance, of the equations derived in section \ref{WDVVsection} suggests that one should develop a theory of
modular dynamical systems. Suppose that one has a set of quasi-modular functions (we assume no analytic properties such as
being holomorphic in the upper-half plane) $\gamma(\tau)$ and $g_\alpha(\tau)\,, \alpha \in \cW$ for some indexing
set $\cW$ such that\footnote{The value of the non-zero constant $a$ is not fixed here - this is to facilitate comparison with the work of various sets of authors who use
different values in their work.}:
\begin{eqnarray*}
\gamma\left(-\frac{1}{\tau}\right) & = & \tau^2 \gamma(\tau) + a \tau\,,\qquad\quad a \neq 0\,{\rm ~constant\,,}\\
g_n\left(-\frac{1}{\tau}\right) & = & \tau^n g_n(\tau)\,.
\end{eqnarray*}
Such a $g_n$ is said to have weight $n\,.$ Define next a Rankin-type derivative
\begin{eqnarray*}
D(g_n) & = & \displaystyle{\frac{d g_n}{d\tau} - \frac{n\,\gamma}{a} g_n}\,,\\
D(\gamma) & = & \displaystyle{\frac{d\gamma}{d\tau} - \frac{1}{a} \gamma^2}\,.
\end{eqnarray*}
It is easy to check that $D(\gamma)$ has weight $4$ and $D(g_n)$ has weight $n+2\,$ and that $D(g_n g_m)=g_n Dg_m + g_m Dg_n\,.$
\begin{Def}
Let $\cW$ be some (finite) indexing set. A modular dynamical system takes the form
\begin{eqnarray*}
D(\gamma) & = & q({\bf g})\,,\\
D(g_\alpha) & = & p_\alpha ({\bf g})\,,\qquad \alpha \in \cW\,.
\end{eqnarray*}
where the polynomials $q$ and $p_\alpha$ are constrained so that the resulting system is invariant under the transformation
$\tau\rightarrow - \frac{1}{\tau}\,.$
\end{Def}

The interplay between the weights and the polynomials places very strong restriction on the form of these equations.
Note that if the polynomials depended on $\gamma$ as well, then the modularity property would be lost.
Thus $\gamma$ only enters through its appearance in the Rankin derivatives.

\begin{example}
Consider the 3-component system with $\cW=\{n\,,m\}\,.$ Clearly if $m\geq 5$ and $n\geq 5$ then $D\gamma=0$ and one
obtains a degenerate system (we say an $n$-component system is degenerate if the differential equation for $\gamma$
is of order less than $n$) and a similar analysis shows that the systems
$\{4,m\geq 7\}\,, \{3,m\geq 6\}\,,\{2,m\geq 5\}\,,\{1,m\geq 5\}$ are also degenerate. Thus there can only be a finite number
of non-degenerate systems. Certain other cases are degenerate, and the final list of non-degenerate cases is:
\[
\cW = \{6,4\}\,,\{4,3\}\,,\{4,2\}\,,\{4,1\}\,,\{3,1\}\,,\{2,2\}\,,\{2,1\}\,,\{1,1\}\,.
\]

\medskip

\noindent Consider now the system $\cW=\{6\,,4\}\,.$ Then the modular dynamical system must take the form:
\begin{eqnarray*}
D(\gamma)  &  = & a g_4\,,\\
D(g_4) & = & b g_6\,,\\
D(g_6) & = & c g_4^2
\end{eqnarray*}
for constants $a\,,b\,,c\,.$ A famous example in this class is the Ramanujan differential equations for the Eisenstein
series $\{E_2\,,E_4\,,E_6\}\,.$ Eliminating $g_4$ and $g_6$ gives the third order system
\[
D^3 \gamma + k (D\gamma)^2 =0
\]
where $k=-\frac{bc}{a}$ (if $a=0$ the system is degenerate). This system again falls into the class of generalized Chazy equations
considered in the appendix and its solutions can be expressed in terms of Schwarzian triangle functions.

\end{example}

The differential equations obtained in section \ref{WDVVsection} may be written as modular dynamical systems:

\bigskip

\noindent{\underline{$N=4\,,\sigma=\frac{1}{2}$ Case I:} The system (\ref{CaseI}) may be written as:
\[
D(h) = -4 g^3\,,\qquad D(g) = h \,, \qquad D(\gamma) = g^2\,.
\]
Here the weights of $g (=g_1)$ and $h$ are 2 and 4 respectively.

\bigskip

\noindent{\underline{$N=4\,,\sigma=\frac{1}{2}$ Case II:} The system (\ref{CaseII}) may be written as:
\[
D(g) = (24 K) h^2\,,\qquad D(h) = (24 K^{-1}) g^2 \,, \qquad D(\gamma) = -288\,g\,h\,.
\]
Here the weights of $g\, (=g_1)$ and $h\, (=g_4)$ are both 2.

\bigskip

\noindent{\underline{$N=4\,,\sigma=\frac{1}{3}$} The system (\ref{CaseIII}) may be written as:
\[
D(g_3) = \frac{3}{2} g_1^5\,,\qquad D(g_1) = g_3 \,, \qquad D(\gamma) = -g_1^4\,.
\]
Here the weights of $g_1$ and $g_3$ are 1 and 3 respectively.

\medskip

There is considerable scope for reducing such modular dynamical systems to canonical form by
redefining, if the weights allow, the functions $\gamma$ and $g_n\,.$

\begin{example}

Consider the system with $\cW=\{1\,,3\}\,.$ The most general system is
\begin{equation}
\begin{array}{rcl}
Dg_1 & = & a\, g_1^3 + b\, g_3 \,,\\
D\gamma & = & c \, g_1^4 + d \, g_1 g_3 \,,\\
D g_3 &=& e \, g_1^5 + f g_1^2 g_3\,
\end{array}
\label{13system}
\end{equation}
where $a\,,b\,,\ldots\,,f$ are arbitrary constants. Within this system one may make the following transformations
\begin{eqnarray*}
g_3 & \mapsto & g_3 + c_1 g_1^3\,,\\
\gamma & \mapsto & \gamma + c_2 g_1^2
\end{eqnarray*}
and with a careful analysis of the various (ignoring degenerate) cases one obtains the following canonical forms:

\[
\begin{array}{lclcl}
b =0\,: &\qquad& Dg_1 & = & 0 \,,\\
&& D \gamma &=& g_1 g_3 \,,\\
&& D g_3 &=& \alpha g_1^5 + \beta g_1^2 g_3\,,\\
&&&&\\
&&&&\\
b\neq 0\,: &\qquad& D g_1 & = & g_3\,,\\
&& D \gamma &=& -g_1^4 \,,\\
&& D g_3 &=& \alpha g_1^5 + \beta g_1^2 g_3\,,
\end{array}
\]
where $\alpha$ and $\beta$ are constants.

On eliminating $g_1$ and $g_3$ one obtains a third order modular ordinary differential equation for $\gamma$
from whose solution one can obtain $g_1$ and $g_3$
by quadrature. The third order system will not, in general, fall into Bureau's class, and the requirement that
it is in this class will place restrictions on the parameters.

For the $\cW=\{1\,,3\}$ system above:

\medskip

\begin{itemize}

\item[$\bullet$] If $b=0$ then it falls into Bureau's class if and only if $\alpha=0\,,\beta\neq 0$ or $\alpha\neq 0\,,\beta=0$
(if $\alpha=\beta=0$ the system is degenerate). The corresponding systems are, respectively
$(D\gamma)(D^3\gamma) = (D^2\gamma)^2$ and $D^3\gamma=0\,.$

\medskip

\item[$\bullet$] If $b\neq 0$ then it falls into Bureau's class if and only if $\beta=0$ and the corresponding system is
\[
(D\gamma)(D^3\gamma) = \frac{3}{4} (D^2\gamma)^2 - 4 \alpha (D\gamma)^3\,.
\]
The solution is given in terms of the Schwarzian triangle function $S[\frac{1}{2},\frac{1}{6},p,t]\,$ where
$p^2=\frac{1}{9}- \frac{1}{6\alpha }\,.$ Note that if one requires the global condition that the triangle functions
form a arithmetic triangle group then this forces $p=0$ and hence $\alpha=\frac{3}{2}\,$ which is precisely
the system \eqref{CaseIII}\,.

\end{itemize}

\end{example}

An alternative method to solve such modular dynamical system is to use a nonlinear change of dependent variable.
Again, for the $\{1,3\}$ system one may define:
\[
X = \gamma\,,\qquad Y = g_1^2\,,\qquad Z = \frac{g_3}{g_1}\,.
\]
With this the system (\ref{13system}) becomes
\begin{eqnarray*}
\dot{X} & = & X^2 + c Y^2 + d Y Z\,,\\
\dot{Y} & = & 2XY + 2 a Y^2 + 2 b YZ\,,\\
\dot{Z} & = & 2 XZ + e Y^2 + (f-a) YZ - b Z^2\,.
\end{eqnarray*}
This falls into the class of (three-component) quadratic dynamical systems considered by Ohyama \cite{Ohyama}:
\[
\dot{X}^i = \sum_{j,k} a^i_{jk} X^j X^k\,.
\]
In fact, all the above examples (except possibly the $\{1,1\}$-system) may be written, via appropriate nonlinear changes of
variable, in such a quadratic form. The method of solution depends on the commutative algebra defined by
\[
x_i \circ x_j = \sum_k a^k_{ij}x_k\,.
\]
In particular the modular invariance of the dynamical systems means that the corresponding algebra carries a unit. Once the algebra is identified (a complete
classification of three dimensional algebras with unity was carried out in \cite{Ohyama}) the solution is given in terms of solutions to second order Fuchsian
ordinary differential equations. In all examples appearing in this paper this turns out to be the hypergeometric equation.

We note also that in all existing examples the underlying modular dynamical system is of rank three. As dimension increases one might expect to find higher order
dynamical systems underlying the WDVV equations, but in all the examples so far found one just obtains third-order systems.

\section{Comments}

As mentioned in the introduction, some of the earliest examples of solutions of the WDVV equation \cite{VW,KTS} were actually examples of polynomial modular
Frobenius manifolds. Recently these have attracted renewed attention due to their appearance in orbifold quantum cohomology, i.e. the Gromov-Witten theory of
$\mathbb{P}^1_{a_1\,,\ldots\,,a_n}$ (the complex projective space with $n$-orbifold points). In particular one has an isomorphism (mirror symmetry)
between the orbifold picture and the singularity theory picture \cite{Rossi,ST}, e.g.
\begin{eqnarray*}
M_{\mathbb{P}^1_{2,2,2,2}} \simeq M_{D^{(1,1)}_4} \,,\\
M_{\mathbb{P}^1_{3,3,3}} \simeq M_{E^{(1,1)}_6} \,.\\
\end{eqnarray*}
The genus-1 contribution to the free energy may also be calculated for these examples \cite{iabs3}. In fact, the modularity property inherent in the genus-0
free energy also manifests itself at genus-1 \cite{iabs4}. Recently Dingdian and Zhang \cite{zhang} have conjectured how higher-genus correction should
transform under inversion symmetry. Ultimately one would wish to understand how the full dispersive hierarchy transforms under this modular symmetry: currently
only the genus-0 (or dispersionless) integrable systems transformation properties are known \cite{zhang, MS}.

\section*{Acknowledgements} IABS would like to thank the Carnegie Trust for the Universities of Scotland and SISSA,
and EM would like to thank the EPSRC, for financial support. Part of this paper was written during visits to SISSA (by IABS) and University of
Melbourne (by EM) and we would like to thank these institutions for their hospitality. We would also like to thank the referees for their
careful reading of the paper and their constructive comments.

\section*{Appendix}

In this appendix we summarize the results of Bureau \cite{Bureau1987} who constructed the general solution to the third order
differential equation
\begin{equation}
{\dddot{u}}= \alpha \frac{(\ddot{u}- 2 u \dot{u})^2}{\dot{u} - u^2} +\beta u \ddot{u} +\gamma \dot{u}^2 +\delta (\dot{u} - u^2)^2
\label{bureau}
\end{equation}
in terms of Schwarzian triangle functions. This will be referred to as Bureau's class.
Here the constants $\alpha\,,\beta\,,\gamma\,,\delta$ are constrained by the
relations
\[
16 \alpha - \gamma =18 \,,\quad 2 \beta + \gamma =6\,.
\]
The solution is given in terms of a function $y\,,$ namely
\[
u=\frac{1}{2}\left\{ \frac{\ddot{y}}{\dot{y}} - q(y) \dot{y}\right\}
\]
where
\[
q(y) = \left( \frac{(1-n)}{y} + \frac{(1-m)}{y-1}\right)
\]
and $y$ satisfies the Schwarzian differential equation
\begin{equation}
\frac{\dddot{y}}{\dot{y}} - \frac{3}{2} \left\{ \frac{\ddot{y}}{\dot{y}}\right\}^2 = -\frac{1}{2}
\left\{ \frac{1-n^2}{y^2} + \frac{1-m^2}{(1-y^2)} - \frac{(1+p^2 - m^2 - n^2)}{y(y-1)} \right\}\dot{y}^2\,.
\label{Schwarz}
\end{equation}
The relationship between $\alpha\,,\beta\,,\gamma\,,\delta$ and $m\,,n\,,p$ are given by the requirement that
$L_i=0\,,i=1\,,2\,,3$ where
\begin{eqnarray*}
L_1  & = & (1-2n) \left[ 1-3n - \alpha (1-2n) \right]\,,\\
L_2  & = & (1-2m) \left[ 1-3m - \alpha (1-2m) \right]\,,\\
L_3 & = & (1-2m)(2-3n) + (1-2n)(2-3m) - 2 \alpha (1-2m)(1-2n) \\
&&+ \frac{1}{4} (\gamma +\delta -6) \left[(1-m-n)^2 - p^2 \right]\,.
\end{eqnarray*}

As is well known, the solutions of the Schwarzian equation (\ref{Schwarz}) are given in terms of the Schwarzian triangle function
$y=S[m,n,p,t]$ which may be constructed as the inverse function of the ratio of two solutions to the hypergeometric
equation with parameters $a\,,b\,,c$ (i.e. one solution is ${}_2F_1[a,b,c,z]$) where
\[
n^2=(1-c)^2\,,\quad m^2=(a+b-c)^2\,,\quad p^2=(a-b)^2\,.
\]
Thus starting with a linear equation - the hypergeometric equation - one may, by taking ratios and inverse functions,
construct solutions to Bureau's equation (\ref{bureau}). In terms of the Rankin derivatives (\ref{bureau}) may be written
as
\[
(Du)(D^3u) = \alpha (D^2u)^2 + (\delta + 16 \alpha - 24) (Du)^3\,
\]
with $\beta = 12 - 8 \alpha\,,\gamma = 16 \alpha - 18\,.$ Note that
with these formulae:
\begin{eqnarray*}
\dot{u} - u^2  & = & h_1(y) \dot{y}^2\,,\\
& = &-\frac{1}{4} \left[ \frac{(1-m-n)^2 - p^2}{y(y-1)}\right] \dot{y}^2\,.
\end{eqnarray*}

In fact, Bureau's method is considerably simplified on the use of the Rankin derivative.
It is easy to show that
\[
D^nu = h_n(y) \dot{y}^{n+1}
\]
where $h_{n+1}(y)=\frac{d h_n(y)}{dy}  + (n+1) q(y) h_n(y)\,.$ Thus Bureau's equation (\ref{bureau}) becomes the algebraic equation
$h_1(y) h_3(y) = \alpha h_2(y)^2 + \beta h_1(y)^3\,$ which should hold for all values of $y\,.$
Equating coefficients gives the three equations $L_i=0\,.$ There
is considerable scope for generalizing this method by replacing the hypergeometric equation with more
general Fuchsian or even non-Fuchsian systems: again one just obtains algebraic conditions.

When $\alpha=0$ one obtains Chazy's equation class XII with solutions given in terms of the
function $S[\frac{1}{2},\frac{1}{3},p,t]$ where
$p^2 = \frac{1}{36} \left( \frac{\delta}{\delta - 24 }\right)$ and the Chazy equation itself occurs when, in addition,
$\delta=0$ with solution
given in terms of the function $S[\frac{1}{2},\frac{1}{3},0,t]\,.$

A systematic procedure to solve such modular differential equations has been derived in \cite{CO}\,. It is interesting to note, though,
that all the examples of modular Frobenius manifolds constructed here come from solutions of Bureau's equation and not more general
equations (or, more notably, higher-order modular equations).


\begin{thebibliography}{99}

\bibitem{ArnoldBook} V. I. Arnold, S. M. Gusein-Zade, and A. N. Varchenko, {\sl Singularities of Differentiable Maps}
(Birkh\"auser, Boston, 1985), Vol. 1, Monograph .Math. 82.


\bibitem{Arnold} {Arnold, V.I. {\sl Classification of unimodal critical points of functions}, Funct. Anal. Appl. {\bf 7(3)} (1977) 230-231.}


\bibitem{B} {Bertola, M. {\sl Frobenius manifold structure on orbit space of Jacobi groups; Parts I and II},
Diff. Geom. Appl. {\bf 13}, (2000), 19-41 and  {\bf 13}, (2000), 213-23, and {\sl Jacobi groups, Jacobi forms and their applications}, Ph.D. thesis, SISSA, 1999.}

\bibitem{Bureau1987} F.J.~Bureau, {\sl
Sur des syst\`emes diff\'erentiels du troisi\`eme ordre
et les \'equations diff\'erentielles associ\'ees},
Bulletin de la Classe des Sciences {\bf 73} (1987) 335--353.

\bibitem{CO} {Clarkson, P.A. and Olver, P.J., {\sl Symmetry and the Chazy Equation}, J. Diff. Eqn. {\bf 124}, (1996), 225-246.}

\bibitem{zhang} Dingdian, X. and Zhang, Y., {\sl The Inversion Symmetry of the WDVV Equations and Tau Functions }, arXiv:1012.5708

\bibitem{dubrovin1} {Dubrovin, B., {\sl Geometry of 2D topological field theories} in {\sl Integrable
Systems and Quantum Groups}, ed. Francaviglia, M. and Greco, S.. Springer lecture
notes in mathematics, {\bf 1620}, 120-348.}

\bibitem{dubrovin2} Dubrovin, B., {\em On almost duality for Frobenius manifolds} in
{\sl Geometry, topology, and mathematical physics}, 75--132,
Amer. Math. Soc. Transl. Ser. 2, 212,
Amer. Math. Soc., Providence, RI, 2004.

\bibitem{DZ} {Dubrovin, B. and Zhang, Y., {\sl Extended affine Weyl groups and Frobenius manifolds},
 Compositio Math. {\bf 111} (1998), 167-219.}

\bibitem{MartiniPost} {Geurts, M. L., Martini, R., Post, G.F., {\sl Symmetries of the WDVV equations},
Acta Appl. Math. 72 (2002), no. 1-2, 67-75. }

\bibitem{McKayHarnad} Harnad, T. and McKay J., {\sl Modular Solutions to Equations of Gneralized Halphen Type},
Proc. R. Soc. Lond. A (2000) {\bf 456} 261-294.

\bibitem{Hertling}  C. Hertling: {\it Frobenius manifolds and moduli spaces
for singularities}, Cambridge Tracts in Mathematics, Cambridge
University Press 2002.


\bibitem{KTS} {Klemm, A., Theisen, S. and Schmidt, M.G., {\sl Correlation functions for Topological
Landau-Ginzburg models with $c\leq 3.$} Int. Journal. Mod. Phys. A, {\bf 7(25)} (1992) 6215-6244.}

\bibitem{MR} Milanov, T. and Ruan, Y., {\sl Gromov-Witten theory of elliptic orbifold $\mathbb{P}^1$ and quasi-modular forms}, arXiv 1106:2321.

\bibitem{MS} Morrison, E.K. and Strachan. I.A.B.  {\sl Modular Frobenius manifolds and their invariant flows},
International Mathematics Research Notices (2010), article ID rnm236, 26 pages, doi:10.1093/imrn/rnq236

\bibitem{NY} {Noumi, M. and Yamada Y., {\sl Notes on the flat structures associated with simple and simply
elliptic singularities,} in "Integrable Systems and Algebraic Geometry"
(eds. M.-H. Saito, Y. Shimuzu, K. Ueno), Proceedings of the Taniguchi Symposium 1997, 373-383,
   World-Scientific, 1998.}

\bibitem{Ohyama} Ohyama, Yousuke {\sl Hypergeometric functions and non-associative algebras}, Proceedings on Moonshine
and related topics (Montréal, QC, 1999), 173184,
CRM Proc. Lecture Notes, 30, Amer. Math. Soc., Providence, RI, 2001.

\bibitem{RS1} Riley, A. and Strachan, I.A.B., {\sl Duality for Jacobi Group Orbit Spaces and Elliptic Solutions of the WDVV Equations},
Lett. Math. Phys. {\bf 77:3} (2006) 221-234.


\bibitem{Rossi} Rossi, P., {\sl Gromov-Witten theory of orbicurves, the space of tri-polynomials and Sympletic Field Theory of Seifert fibrations},
Mathematische Annalen, {\bf 348(2)} (2010), 265-287.


\bibitem{saito} Saito, K., {\sl On a linear structure of a quotient variety by a finite reflection group}\,,
Preprint RIMS-288 (1979).


\bibitem{saito3} Saito, K., {\sl Extended Affine Root Systems I (Coxeter transformations), II (Flat invariants)}, Publ. RIMS, Kyoto Univ. {\bf 21} (1985), 75-179, {\bf 26} (1990), 15-78.

\bibitem{Satake} Satake, I,. {\sl Flat Structure for the Simple Elliptic Singularity of type ${\widetilde{E}}_6$ and Jacobi Form}, Proc. Japan
Acad. {\bf 69} Ser. A (1993) 247-251.

\bibitem{SatakeFrob} Satake, I,. {\sl Frobenius manifolds for elliptic root systems}, arXiv:math/0611553.

\bibitem{SatakeD4} Satake, I,. {\sl Flat Structure and the Prepotential for the Elliptic Root System of Type $D_4^{(1,1)}$} in
{\sl Topological Field Theory, Primitive Froms and Related Topics} (Progress in Mathematics vol. $160$, Birkh\"auser), ed. M. Kashiwara, K. Saito, I. Satake.

\bibitem{ST} Satake, I. and Takahashi, A., {\sl Gromov-Witten invariants for mirror orbifolds of simple elliptic singularities}, arXiv:1103.0951.

\bibitem{STletter} Satake, I. and Takahashi, A., {\sl Private communication and forthcoming paper of Ebeling and Takahashi}.




\bibitem{iabs2} Strachan, I.A.B., {\sl Weyl groups and elliptic solutions of the WDVV equations}, Advances in Mathematics
{\bf 224:5}, (2010), 1801-1838.

\bibitem{iabs3} Strachan, I.A.B., {\sl Simple Elliptic Singularities: a note on their $G$-function}, arXiv:1004.2140, to appear, {\sl The Asian Journal of Mathematics}.

\bibitem{iabs4} {Strachan, I.A.B., {\sl Symmetries and Solutions of Getzler's equation for Coxeter and Extended Affine Weyl Frobenius manifolds},
Int. Math. Res. Notices {\bf 19} (2003) 1035-1051.}

\bibitem{VW} {Verlinde, E. and Warner, N.P., {\sl Topological Landau-Ginzburg matter at $c=3$}, Phys. Lett. {\bf B 269} (1991) 96-102.}

\bibitem{Zuber}  {Zuber, J.-B., {\sl On Dubrovin topological field theories}, Mod. Phys. Lett.
{\bf A9} (1994) 749-760.}

\end{thebibliography}
\end{document}